\newtheorem{thm}{Theorem}[section]
\newtheorem{cor}[thm]{Corollary}
\newtheorem{lem}[thm]{Lemma}
\newtheorem{prop}[thm]{Proposition}
\theoremstyle{definition}
\newtheorem{defin}[thm]{Definition}
\newtheorem*{xrem}{Remark}
\newtheorem*{xrems}{Remarks}
\newenvironment{dedication}
    {\vspace{6ex}\begin{quotation}\begin{center}\begin{em}}
    {\par\end{em}\end{center}\end{quotation}}
\newcommand{\R}{{\mathord{\mathbb R}}}
\newcommand{\V}{\ensuremath{\mathbf{v}}}
\newcommand{\el}{\ensuremath{\mathscr{L}}}
\newcommand{\eg}{\ensuremath{\mathscr{G}}}
\newcommand{\cX}{{\cal X}}
\renewcommand{\a}{\alpha}
\newcommand{\ba}{\underline{\alpha}}
\def\<#1>{\langle#1\rangle}
\def\Xint#1{\mathchoice
{\XXint\displaystyle\textstyle{#1}}%
{\XXint\textstyle\scriptstyle{#1}}%
{\XXint\scriptstyle\scriptscriptstyle{#1}}%
{\XXint\scriptscriptstyle\scriptscriptstyle{#1}}%
\!\int}
\def\XXint#1#2#3{{\setbox0=\hbox{$#1{#2#3}{\int}$ }
\vcenter{\hbox{$#2#3$ }}\kern-.6\wd0}}
\def\dashint{\Xint-}
\numberwithin{equation}{section}
\begin{document}

\makeatletter{\renewcommand*{\@makefnmark}{}
\footnotetext{Work partially supported by U.S. National Science Foundation
grant DMS 1301555\\
\copyright\, 2013 by the authors. This paper may be reproduced, in its 
entirety, for non-commercial purposes.}\makeatother}

\title{The Kac Model Coupled to a Thermostat}

\author{\vspace{5pt} Federico Bonetto$^{1}$, Michael Loss$^{1}$, and Ranjini 
Vaidyanathan$^{1}$ \\
\vspace{5pt}\small{$1.$ School of Mathematics, Georgia Institute of 
Technology}\\[-6pt]
\small{
686 Cherry Street
Atlanta, GA 30332-0160 USA}\\
 }

\maketitle

\begin{dedication}
We dedicate this article to our friend, teacher, and co-author Herbert Spohn.
\end{dedication}

\begin{abstract}
In this paper we study a model of randomly colliding particles interacting with 
a thermal bath. Collisions between particles are modeled via the Kac master 
equation while the thermostat is seen as an infinite gas at thermal equilibrium 
at inverse temperature $\beta$. The system admits the canonical distribution at 
inverse temperature $\beta$ as the unique equilibrium state. We prove that any 
initial distribution approaches the equilibrium distribution exponentially fast 
both by computing the gap of the generator of the evolution, in a proper 
function space, as well as by proving exponential decay in relative entropy. We 
also show that the evolution propagates chaos and that the one particle 
marginal, in the large system limit, satisfies an effective Boltzmann-type 
equation.

\end{abstract}

\medskip
\leftline{\footnotesize{\qquad Mathematics subject classification numbers:  
47A63, 15A90}}
\leftline{\footnotesize{\qquad Key Words: spectral gap, kinetic theory}}

\section{Introduction} \label{intro}

The master equation approach to kinetic theory has had a revival in recent 
years. It was introduced by Mark Kac in 1956 \cite{kac} to model a system 
of $N$ particles interacting through a Markov process. In its basic form, after 
waiting an exponentially distributed time,  one selects randomly and uniformly a 
pair of particles and lets them collide with a random scattering angle.  One 
assumes a spatially homogeneous situation in which the state of the system is 
entirely specified by the velocities of the particles. The time evolution for 
the probability distribution of finding the system in a given state is then a 
{\it linear} master equation (albeit in very high dimensions) called the Kac 
master equation.

The model is based on clear probabilistic assumptions, and its simplicity 
allows one to focus on central issues that are very difficult to study in more 
fundamental models like Newtonian mechanics. Kac's main motivation was to give a 
rigorous derivation of the {\it non-linear} spatially homogeneous Boltzmann 
equation (\cite{kac}, see also \cite{McK}). This is based on the notion of 
chaotic sequences (which Kac called `sequences that have the Boltzmann 
property'). A derivation of the Boltzmann equation from the laws of classical 
mechanics is much more difficult and so far has only been achieved for 
situations with very few collisions 
\cite{lanfordproof,lanfordreview,illnerpulvirenti}.

The Kac master equation also yields some insight into the the central question 
of approach to equilibrium for large particle systems. Kac suggested using 
the gap as a measure for the rate of approach to equlibrium and he conjectured 
that it is bounded below by a positive constant independent of the particle 
number $N$ \cite{kac}. This conjecture was first proved in \cite{jeanvresse} 
and shortly  thereafter the gap was computed exactly \cite{CCL,Maslen}.  

It turns out that while the gap is a good notion for measuring the rate of 
approach once the system is close to equilibrium, this is not the case far away 
from equilibrium. The gap measures the rate at which the $L^2$-norm of the 
deviation from equilibrium tends to zero. Since probability distributions 
in $N$ variables that are close to a product have an $L^2$-norm which is of the 
order of $C^N,\, C>1$, one has to wait times of order $N$ until this norm falls 
below a fixed number. This is clearly not physical. Furthermore, in all 
the exact calculations, the gap as well as the single particle marginal of the 
gap eigenfunction approach the corresponding quantities of the {\it linearized} 
Boltzmann equation, as $N \to \infty$.

A better notion of equilibration is in the entropic sense. It is easy to 
prove that the relative entropy of any state (c.f. eq.~\eqref{E:relent}) is 
decreasing in time (note that we define entropy with opposite sign). One would 
expect the {\it entropy production}, i.e. the negative  time derivative of the 
entropy, to be proportional to the entropy itself because one expects 
exponential decay of the entropy. The results in this direction have been 
disappointing. It turns out that there are states whose entropy production is 
inversely proportional to the particle number $N$. In \cite{villani} a lower 
bound inversely proportional to $N$ was given while an upper bound of the same 
order was proven in \cite{amit}. The upper bound was achieved by estimating the 
entropy production of a state in which approximately half of the kinetic 
energy is stored in $N\delta_N$ particles and, of course, the remaining 
kinetic energy is stored in $N(1-\delta_N)$ particles. It was shown in 
\cite{amit} that the ratio of entropy production and the entropy is bounded 
above by  $C_\beta N^{-1+\beta}$ for any $\beta >0$ provided one chooses 
$\delta_N$ as a suitable inverse power of $N$. Clearly, such a state, in 
which a few molecules contain half of the total kinetic energy is not observed 
in nature. This raises the question of how to characterize those states for 
which the entropy converges to zero on a reasonable time scale. Our very 
preliminary answer is to consider  states in which an `overwhelming' number of 
particles with kinetic energy per particle of the order $\beta^{-1}$ are in 
equilibrium and only a few particles, a `local'  disturbance are out of 
equilibrium.

How should one describe such a state in the context of a model that is spatially 
homogeneous in the first place? In this paper we address this question by 
coupling a system of particles, `the small system out of equilibrium' to a heat 
bath, `the large system in equilibrium', i.e., we couple the Kac model to a 
thermostat. This idea is not new. There has been work in \cite{Frohlich-Student} 
on the (spatially inhomogeneous) Boltzmann equation coupled to a thermostat 
where approach to equilibrium was proved. Likewise there has been recent 
work in \cite{BCKLs,BCELM} that considered particles in an electric field 
interacting with external scatterers (billiard), where the 
thermostat is given by a deterministic friction term while the collision with 
the obstacles provide stochasticity. 

In the present work we focus exclusively on the approach to equilibrium of a 
system of particles in one space dimension having unit mass interacting with 
a thermostat in the context of the Kac master equation. There are a number of 
ways to model this and the route taken here is to describe the thermostat as 
the interaction of particles of the Kac system with particles that are already 
at equilibrium, i.e., whose distribution is given by a Gaussian with inverse 
temperature $\beta$. Moreover we assume that the thermostat is much larger
than the system so that every particle in the thermostat collides at most once 
with a particle in the Kac system.

Calling $f_t(\V)$, $\V\in\R^N$, the probability distribution of 
finding the system with velocities $\V$ at time $t$ , our master equation is given by
\begin{equation} \label{E:mast}
\frac{\partial f}{\partial t} = -\eg f:=  -\lambda N(I-Q)[f] - \mu 
\sum_{j=1}^{N}{(I-R_j)}[f] \ .
\end{equation}
The first term $\eg_K := N(I-Q)$ describes the collision among the particles and has the usual 
Kac form
\[ 
Q[f](\V) := \frac{1}{\binom{N}{2}} \sum_{i<j} 
\dashint_{0}^{2\pi} f(\V_{i,j}(\theta))d\theta 
\]
with
\begin{align*}
&\V_{i,j}(\theta)=(v_1,\ldots,v_i^*(\theta),\ldots,v_j^*(\theta),
\ldots , v_N)\crcr
&v_i^*(\theta) = v_i \cos{(\theta)} + v_j 
\sin{(\theta)}\qquad \,\qquad v_j^*(\theta) 
= -v_i \sin{(\theta)} + v_j \cos{(\theta)} 
\end{align*}
while the second term $\eg_T := \sum_{j=1}^{N}{(I-R_j)}$ describes the interaction with the thermostat where
\[
R_j[f]:= \int{dw \dashint_{0}^{2\pi}{d\theta \sqrt{\frac \beta {2\pi}} e^{-\frac \beta 2 
w_j^{*2}(\theta)} f(\mathbf{v}_j(\theta,w))}} 
\]
and $\mathbf{v}_j(\theta,w)=(v_1,..., v_j \cos{(\theta)} + w 
\sin{(\theta)},...,v_N) $, $w_j^*(\theta) = -v_j \sin{(\theta)} + w 
\cos{(\theta)}$. We use the notation
\[
 \dashint_a^b f(\theta)d\theta=\frac{1}{b-a}\int_a^b f(\theta)d\theta.
\]

In contrast to Kac's original work, where the configuration space was the 
constant-energy sphere, here the configuration space is $\mathbb{R}^N$ since the 
energy is not conserved.

As we will see, the time evolution defined by~\eqref{E:mast} is 
ergodic and has the unique equilibrium state 
\[
\gamma(\V):= \prod_{j}{g(v_j)} \ ,
\] 
where $g(v)= \sqrt{\frac \beta {2\pi}} e^{-\frac \beta 2 v^2}$. 

As mentioned before, the approach to equilibrium can be measured in a 
quantitative fashion by the gap of the operator $\mathcal G$. While not 
self-adjoint on the space $L^2(\R^N, d\V)$, a ground state transformation can be 
performed that brings this operator into a self-adjoint form on the space 
$L^2(\R^N, \gamma(\V)d\V)$. Writing
\[
 f(\V)=\gamma(\V)(1+h(\V))
\]
in eq.~\eqref{E:mast} we get the new equation
\begin{equation} \label{E:hmast}
 \frac{\partial h}{\partial t} = -\mathcal{L} f:=-\lambda N (I-Q)[h]- \mu 
\displaystyle\sum_{j=1}^{N}{(I-T_j)[h]}
\end{equation}
where 
\[
T_j[h]=\int dw g(w) \dashint d\theta 
h(\V_j(\theta,w)).
\]
We consider $h$ in the space $L^2(\R^N, \gamma(\V)d\V)$ with inner product 
$\langle h_1,h_2 \rangle:=\int{h_1 h_2 \gamma d\V}$. Defining the spectral gap 
as \[
\Delta_N := \inf \{ |\langle h, \el h \rangle| : ||h||=1, \langle h,1 \rangle=0 \}
\]
we get the following:

\begin{prop}\label{P:1gap} We have
\[
\Delta_N = \frac \mu 2. 
\] 
The corresponding eigenfunction is 
\[
h_{\Delta_N}(\V):=\sum_{i=1}^N\left(v_i^2 -\frac{1}{\beta}\right).
\]
\end{prop}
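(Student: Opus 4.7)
The plan is to exploit a Hermite polynomial expansion to diagonalize $\mathcal{L}_T:=\mu\sum_j(I-T_j)$ completely, use nonnegativity of $\mathcal{L}_K:=\lambda N(I-Q)$ as a lower bound, and identify the eigenfunction $h_{\Delta_N}$ with the zero-mode of $\mathcal{L}_K$ produced by Kac's energy conservation. Let $\{H_n\}_{n\ge 0}$ denote the Hermite polynomials, orthonormal in $L^2(\R,g(v)dv)$, with $H_0\equiv 1$. Their tensor products $H_{\ba}(\V):=\prod_i H_{n_i}(v_i)$ form an orthonormal basis of $L^2(\R^N,\gamma(\V)d\V)$.

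First I would diagonalize $T_j$ in this basis. Using the generating function $\sum_n \tfrac{t^n}{n!}H_n(v)=e^{tv-t^2/(2\beta^{-1})}$, a short computation shows that whenever $w\sim g$ is independent of $v_j$,
\[
\int dw\, g(w)\,H_n(v_j\cos\theta+w\sin\theta)=\cos^n(\theta)\,H_n(v_j),
\]
because the $w$-integral kills every Hermite factor in $w$ of positive degree. Averaging over $\theta$ gives $T_j[H_{\ba}]=\tau_{n_j}H_{\ba}$ with $\tau_n:=\dashint_0^{2\pi}\cos^n\theta\,d\theta$, which equals $0$ for odd $n$ and $\binom{n}{n/2}2^{-n}$ for even $n$. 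In particular $\tau_0=1$ and $\max_{n\ge 1}\tau_n=\tau_2=\tfrac{1}{2}$, so $1-\tau_n\ge \tfrac12$ for all $n\ge 1$.

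Next, expanding $h=\sum_{\ba}c_{\ba}H_{\ba}$ with $\sum|c_{\ba}|^2=1$ and $c_{\underline{0}}=0$, the diagonalization gives
\[
\langle h,\mathcal{L}_T h\rangle=\mu\sum_{\ba}|c_{\ba}|^2\sum_{j=1}^N(1-\tau_{n_j})\ge \frac{\mu}{2}\sum_{\ba\neq\underline{0}}|c_{\ba}|^2=\frac{\mu}{2}\|h\|^2,
\]
since for every $\ba\neq\underline{0}$ at least one $n_j\ge 1$, contributing at least $\tfrac12$ to the inner sum. Because $\mathcal{L}_K$ is a positive operator on $L^2(\gamma d\V)$ (it is self-adjoint by the ground-state transformation and has $(I-Q)$ as a nonnegative Markov generator), one obtains
\[
\langle h,\el h\rangle=\lambda N\langle h,(I-Q)h\rangle+\langle h,\mathcal{L}_T h\rangle\ge\frac{\mu}{2}\|h\|^2
\]
for all $h\perp 1$. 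This establishes $\Delta_N\ge \mu/2$.

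Finally I would show this lower bound is saturated by $h_{\Delta_N}=\sum_i(v_i^2-1/\beta)=\sum_i H_2(v_i)$ (up to a constant). Kac collisions preserve the total kinetic energy $\sum_i v_i^2$, so $Q[\sum_i v_i^2]=\sum_i v_i^2$ and hence $\mathcal{L}_K[h_{\Delta_N}]=0$. On the other hand, because $h_{\Delta_N}$ is a sum of single-variable degree-$2$ Hermite polynomials, $\mathcal{L}_T[h_{\Delta_N}]=\mu(1-\tau_2)h_{\Delta_N}=\tfrac{\mu}{2}h_{\Delta_N}$. Thus $\el h_{\Delta_N}=\tfrac{\mu}{2}h_{\Delta_N}$, completing the proof. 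The only computational step that requires care is the generating-function identity that diagonalizes $T_j$; everything else is bookkeeping with Hermite degrees and the observation that the minimal nonzero value of $1-\tau_n$ is achieved precisely at $n=2$, which pinpoints the eigenfunction as a sum of single-site $H_2$'s lying in $\ker\mathcal{L}_K$.
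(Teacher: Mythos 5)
Your proof is correct and follows essentially the same route as the paper's: diagonalize $\el_T$ in the Hermite basis (the paper's Lemma~\ref{L:herm} and Corollary~\ref{C:elT}), observe that the minimal nonzero eigenvalue of $\sum_j(I-T_j)$ is $1-s_2=\tfrac12$ so that $\mu\el_T\ge\tfrac{\mu}{2}$ on $\cX^\perp$, use $\el_K\ge 0$, and note that $\sum_i H_2(v_i)$ is radial hence in $\ker\el_K$ while being an eigenfunction of $\el_T$ with eigenvalue $\tfrac12$. The only cosmetic difference is that you re-derive the diagonalization of $T_j$ via the generating function inline, whereas the paper isolates it as a separate lemma.
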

Thus, the spectral gap does not depend on the parameter $\lambda$ of the Kac 
operator. To have a more precise idea of the role of $\lambda$ in the 
equilibration process we also study the ``second'' spectral gap defined as
\[
\Delta_N^{(2)} := \inf \{ |\langle h, \el h \rangle| : ||h||=1, \langle h,1 \rangle=0, 
\langle h,h_{\Delta_N} \rangle=0 \} \ .
\]

We have the following theorem that renders the gap $\Delta_N^{(2)}$  in an 
explicit fashion. 

\begin{thm} \label{T:2gap}
$\Delta_N^{(2)}$ is given by the lower root $a_2$ of the quadratic equation
 \begin{equation} \label{E:2gap}
  x^2 - \left(\lambda \Lambda_N + \frac{13}{8} \mu\right)x + \mu 
\left(\lambda \Lambda_N + \frac{5}{8}\mu\right) -
\frac{3}{8} \lambda \Lambda_N \mu \left(\frac {3}{N+2}\right) =0 \ ,
\end{equation}
where $\Lambda_N = \frac{1}{2}\frac{N+2}{N-1}$. The corresponding eigenfunction 
is an even polynomial of degree 4 in all the $v_i$.
\end{thm}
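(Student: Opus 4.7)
The plan is to use a Hermite-polynomial basis to diagonalize the thermostat operator $\mu\sum_j(I-T_j)$ and reduce the eigenvalue problem to an explicit $2{\times}2$ matrix on a symmetric degree-four invariant subspace.

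\emph{Step 1 (Hermite diagonalization of $T_j$).} Let $H_k$ be the probabilist's Hermite polynomials, normalized so that $\int H_kH_\ell\,g\,dv=k!\,\delta_{k\ell}$. A direct computation (equivalently, Hermite polynomials diagonalize the Ornstein--Uhlenbeck semigroup) yields
\[
\int H_k\bigl(\sqrt\beta\,(v\cos\theta+w\sin\theta)\bigr)\,g(w)\,dw=\cos^k\theta\;H_k(\sqrt\beta\,v),
\]
so that $T_j H_k(\sqrt\beta v_j)=c_k\,H_k(\sqrt\beta v_j)$ with $c_k:=\dashint_0^{2\pi}\cos^k\theta\,d\theta$; in particular $c_{2m+1}=0$, $1-c_2=\tfrac12$, $1-c_4=\tfrac58$, $1-c_6=\tfrac{11}{16}$. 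The products $\prod_i H_{k_i}(\sqrt\beta v_i)$ form an orthogonal Hermite basis of $L^2(\gamma\,d\V)$ on which $\mu\sum_j(I-T_j)$ is diagonal with eigenvalues $\mu\sum_i(1-c_{k_i})$.

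\emph{Step 2 (Invariant blocks).} The Hermite addition formula $H_n(xc+ys)=\sum_{a+b=n}\binom{n}{a}c^as^b\,H_a(x)H_b(y)$ shows that each pair-rotation $Q_{k,l}$, and hence $\lambda N(I-Q)$, preserves the total Hermite degree $d=\sum_ik_i$. Since Kac densities are exchangeable, I restrict to the $S_N$-symmetric subspace, on which $\el$ decomposes into finite-dimensional invariant blocks indexed by $d$ and by the partition type of the Hermite profile. The first-gap vector $h_{\Delta_N}$ is (up to scalar) the unique symmetric element with $d=2$, so the second-gap sector lies at higher $d$.

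\emph{Step 3 (The $2{\times}2$ block).} In the $d=4$ symmetric sector take
\[
A:=\sum_i H_4(\sqrt\beta v_i), \qquad B:=\sum_{i<j} H_2(\sqrt\beta v_i)H_2(\sqrt\beta v_j),
\]
both orthogonal to $1$ and to $h_{\Delta_N}$. Using the $n=4$ addition formula with $\dashint\cos^4=\tfrac38$ and $\dashint\cos^2\sin^2=\tfrac18$ one verifies
\[
Q_{k,l}H_4(\sqrt\beta v_k)=\tfrac38\bigl[H_4(\sqrt\beta v_k)+H_4(\sqrt\beta v_l)\bigr]+\tfrac34 H_2(\sqrt\beta v_k)H_2(\sqrt\beta v_l),
\]
together with an analogous (slightly longer) expansion of $Q_{k,l}[H_2(\sqrt\beta v_k)H_2(\sqrt\beta v_l)]$. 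Summing over pairs, the span of $\{A,B\}$ is $\el$-invariant with $\lambda N(I-Q)A=\tfrac\lambda2 A-\tfrac{3\lambda}{N-1}B$, $\lambda N(I-Q)B=-\tfrac\lambda4 A+\tfrac{3\lambda}{2(N-1)}B$, and $\mu\sum_j(I-T_j)$ acting as $\tfrac58\mu$ on $A$ and $\mu$ on $B$. The $2{\times}2$ matrix of $\el$ has trace $\lambda\Lambda_N+\tfrac{13}{8}\mu$ and determinant $\mu\lambda\Lambda_N\bigl(1-\tfrac{9}{8(N+2)}\bigr)+\tfrac58\mu^2$, so its characteristic polynomial is precisely~\eqref{E:2gap}. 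Its lower root is $a_2$, with eigenvector a specific combination of $A$ and $B$ which is an even degree-four polynomial in the $v_i$.

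\emph{Step 4 (Minimality).} It remains to verify that no other symmetric block gives an eigenvalue below $a_2$. Because $\lambda N(I-Q)\ge 0$, in each block $\el\ge\mu\sum_i(1-c_{k_i})$: profiles containing any odd part contribute $\ge\mu$; profiles with two or more even parts $\ge 2$ contribute $\ge\mu$; and a single even $k\ge 6$ contributes $\ge 11\mu/16$. A sign check of~\eqref{E:2gap} at $x=11\mu/16$ shows $a_2\le 11\mu/16$ in the range where this lower bound already suffices; in the remaining range one must track the explicit positive diagonal action of $\lambda N(I-Q)$ on the bottom of the three-dimensional $(6)$--$(4,2)$--$(2,2,2)$ block (and analogously higher blocks) to push its bottom eigenvalue strictly above $a_2$. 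This minimality check is the principal technical obstacle; by contrast, the $2{\times}2$ computation in Step~3 is an essentially mechanical consequence of the Hermite structure.
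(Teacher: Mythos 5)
Your Steps 1--3 are correct and match the paper's route: Lemma~\ref{L:herm} is your Step~1, your Step~2 is the content of Lemma~\ref{L:hom} (though stated via the Hermite addition formula rather than the paper's self-adjointness argument), and your Step~3 reproduces the $2\times2$ matrix~\eqref{E:matrix} in an unnormalized basis; I checked that symmetrizing by $\|A\|^2=24N$ and $\|B\|^2=2N(N-1)$ recovers the paper's entries, and your trace and determinant agree with~\eqref{E:2gap}. One small slip in Step~2: the blocks are \emph{not} indexed by partition type, since your own formula $Q_{k,l}H_4(v_k)=\tfrac38[H_4(v_k)+H_4(v_l)]+\tfrac34 H_2(v_k)H_2(v_l)$ shows that $\el_K$ mixes the $(4)$ and $(2,2)$ profiles. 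In practice you treat the full $L_4$ symmetric space, so this is only loose wording, but it infects Step~4.

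The genuine gap is Step~4, which you yourself flag as ``the principal technical obstacle'' and leave open, and it really is where the theorem lives. Your proposed crude bound $\el\ge\mu\,\el_T\ge\mu(1-s_{2l})$ on $L_{2l}$ gives $\ge 11\mu/16$ for $l\ge3$, but this is independent of $\lambda$, while $a_2$ increases with $\lambda$ and tends to $\mu$ as $\lambda\to\infty$ (at fixed $N$, or as $N\to\infty$ per~\eqref{E:gapinf}); so for $\lambda$ large relative to $\mu$ the bound is strictly worse than $a_2$ and the argument fails. You would need to extract a $\lambda$-dependent contribution from $\el_K$ on every higher block $L_{2l}$, $l\ge3$, and on the full (not just symmetric) space, which tracking $3\times3$, $4\times4$, $\ldots$ blocks by hand does not accomplish. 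The paper resolves exactly this by replacing $\el_K$ with $\el_R=\Lambda_N(I-B)$ via Theorem~\ref{T:CCL}, where $B$ is the projection onto radial functions. The operator $\el_S=\mu\el_T+\lambda\el_R$ is tractable on every $L_{2l}$: Theorem~\ref{T:simple} turns its eigenvalue problem into the one-parameter secular equation~\eqref{E:eig} and bounds the bottom eigenvalue $a_l$ below by the root $x_l$ of~\eqref{quadratic}. The monotonicity $s_{2l}\le s_4=\tfrac38$ and $\Gamma(l,0,\ldots,0)\le\Gamma(2,0,\ldots,0)=\tfrac{3}{N(N+2)}$ for $l\ge2$ then yields $x_l\ge a_2$ uniformly, completing the lower bound. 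This replacement step is the idea your proposal is missing.
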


As $N \to \infty$ one finds for the gap 
\begin{equation} \label{E:gapinf}
\Delta^{(2)}_\infty =  \min\left\{ \frac{\lambda}{2} + \frac{5}{8} 
\mu,\mu\right\} \ .
\end{equation}

As explained before, a deeper way of understanding the approach to equilibrium 
is to consider the entropy production. In the model at hand one is indeed in the 
lucky situation that the interaction with the thermostat yields a decay rate for 
the entropy, uniformly in $N$. The relative entropy of a state $f$ with respect 
to the equilibrium state $\gamma$ is defined as 
\begin{equation} \label{E:relent}
S(f|\gamma) := \int_{\R^N} f(\V) \log \frac{f(\V)}{\gamma(\V)} d \V \ .
\end{equation}

\begin{thm}\label{T:conen}
Let $f_t$ be the solution of the master equation~\eqref{E:mast} with initial 
condition $f_0$. Then
\[
S(f_t|\gamma) \le e^{-\rho t} S(f_0|\gamma) \ ,
\]
where
\[
 \rho=\frac{\mu}{2} \ .
\]
\end{thm}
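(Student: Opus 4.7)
The plan is to differentiate $S(f_t|\gamma)$ along the flow, split the resulting entropy production into Kac and thermostat contributions, discard the (non-negative) Kac piece, and prove that the thermostat piece alone dominates $\tfrac{\mu}{2}S(f_t|\gamma)$; Gronwall's inequality then gives exponential decay.

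A standard manipulation using $\int\partial_t f\,d\V=0$ gives
\[
-\frac{d}{dt}S(f_t|\gamma) = \lambda N\int (I-Q)f\log(f/\gamma)\,d\V + \mu\sum_{j=1}^N\int (I-R_j)f\log(f/\gamma)\,d\V.
\]
The Kac term is non-negative by the usual H-theorem: the identity $\int (I-Q)f\log(f/\gamma)\,d\V = [S(f|\gamma)-S(Qf|\gamma)] + S(Qf|f)$ displays it as a monotonicity term (non-negative because $Q$ is a $\gamma$-preserving Markov average) plus a Gibbs term (non-negative because $Qf$ and $f$ are probability densities). It therefore suffices to show $\sum_j\int (I-R_j)f\log(f/\gamma)\,d\V\ge \tfrac12\,S(f|\gamma)$.

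Because each $R_j$ acts only in the $v_j$-slice of $f$, the $j$-th term reduces to a one-dimensional entropy production by conditioning on $\V_{\hat j}$. The $k\neq j$ factors of $\log\gamma$ drop out (they depend only on $\V_{\hat j}$ while $R_j$ preserves $\int\cdot\,dv_j$), and writing the slice as $f(v_j,\V_{\hat j})=f^{[j]}(\V_{\hat j})\rho_{\V_{\hat j}}(v_j)$ with $f^{[j]}:=\int f\,dv_j$ one finds
\[
\int (I-R_j)f\log(f/\gamma)\,d\V = \int d\V_{\hat j}\, f^{[j]}(\V_{\hat j})\, D[\rho_{\V_{\hat j}}],\qquad D[\rho]:=\int (\rho-R\rho)(v)\log(\rho/g)(v)\,dv,
\]
where $R$ is the one-dimensional version of the thermostat operator. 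The technical heart of the proof is then the 1D modified log-Sobolev bound $D[\rho]\ge\tfrac12 S(\rho|g)$, which I would establish in two steps. First, the algebraic identity $D[\rho]=[S(\rho|g)-S(R\rho|g)]+S(R\rho|\rho)$ combined with Gibbs' inequality $S(R\rho|\rho)\ge 0$ reduces the claim to $S(R\rho|g)\le\tfrac12 S(\rho|g)$. Second, an explicit change of variables identifies the one-angle operator $R_\theta$ (with $R=\dashint_0^{2\pi} R_\theta\,d\theta$) as the Ornstein--Uhlenbeck semigroup with invariant measure $g$ at time $t(\theta)=-\log|\cos\theta|$, possibly composed with the parity flip $v\mapsto -v$ when $\cos\theta<0$; since $g$ is even the parity is harmless. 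The classical exponential entropy decay $S(P_t^{\mathrm{OU}}\rho|g)\le e^{-2t}S(\rho|g)$ for OU then gives $S(R_\theta\rho|g)\le \cos^2\theta\, S(\rho|g)$, and convexity of relative entropy together with $\dashint_0^{2\pi}\cos^2\theta\,d\theta=\tfrac12$ yields the desired $S(R\rho|g)\le\tfrac12 S(\rho|g)$.

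Combining the 1D estimate with the conditioning formula gives $\int(I-R_j)f\log(f/\gamma)\,d\V\ge\tfrac12 S(f|P_jf)$, where $P_jf(\V):=g(v_j)f^{[j]}(\V_{\hat j})$. Using the splitting $S(f|\gamma)=S(f|P_jf)+S(f^{[j]}|\gamma^{[j]})$, the summed inequality $\sum_j S(f|P_jf)\ge S(f|\gamma)$ is equivalent to the relative-entropy version of Han's inequality, $\sum_j S(f^{[j]}|\gamma^{[j]})\le (N-1)S(f|\gamma)$, a standard consequence of the chain rule for relative entropy and non-negativity of conditional mutual information. Collecting everything yields $\tfrac{d}{dt}S(f_t|\gamma)\le -\tfrac{\mu}{2}S(f_t|\gamma)$, and Gronwall concludes. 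The main obstacle is executing the 1D step cleanly, in particular identifying $R_\theta$ with OU (including the parity correction for $\cos\theta<0$) and verifying that the OU log-Sobolev constant is the $\beta$-independent value $2$ under the present normalization, which follows by rescaling $v\mapsto\sqrt{\beta}v$ to reduce to standard variance-one OU.
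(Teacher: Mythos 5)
Your proof is correct and follows the same overall strategy as the paper (Proposition~\ref{P:entr}): discard the non-negative Kac contribution, condition on $\hat{\V}_j$ to reduce to a one-dimensional thermostat estimate, relate the one-dimensional operator to the Ornstein--Uhlenbeck semigroup via $\cos\theta=e^{-s}$, and close the argument with a subadditivity inequality for the marginal entropies. Where you differ is in two technical steps, both of which are genuine simplifications. First, the paper proves the intermediate inequality $\int g\,T[G]\log T[G]\,dv\le\tfrac12\int g\,G\log G\,dv$ (i.e.\ $S(R\rho\,|\,g)\le\tfrac12 S(\rho\,|\,g)$) and then passes from $\log T[G]$ to $\log G$ by exponentiating $T-I$, applying convexity to the resulting series, and reading off the first-order Taylor coefficient at $t=0$; your decomposition $\int R\rho\log(\rho/g) = S(R\rho\,|\,g)-S(R\rho\,|\,\rho)\le S(R\rho\,|\,g)$ gets Proposition~\ref{P:N1} from the same intermediate inequality in one line, via Gibbs. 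Second, the paper restricts to $\theta\in[0,\pi/2]$ (defining $\overline{T}$) and then handles the full average by an even/odd decomposition of $G$; you instead note that for $\cos\theta<0$ the one-angle operator is $P_{-\log|\cos\theta|}$ conjugated by the parity $v\mapsto -v$, which leaves $S(\cdot\,|\,g)$ unchanged because $g$ is even, and then convexity of $\rho\mapsto S(\rho\,|\,g)$ together with $\dashint_0^{2\pi}\cos^2\theta\,d\theta=\tfrac12$ finishes the job directly over $[0,2\pi]$. Your treatment of the marginal step --- via $S(f|\gamma)=S(f|P_jf)+S(f^{[j]}|\gamma^{[j]})$ and the relative-entropy Han inequality --- is equivalent to the paper's Lemma~\ref{L:jen}, which is the same fact proved via Loomis--Whitney; either derivation is acceptable. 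Your added observation that the Kac term is non-negative by the data-processing and Gibbs identities is a valid alternative to the paper's citation of Kac's original entropy argument.
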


The central concept for the derivation of the Boltzmann equation is that of a
chaotic sequence. More 
precisely, given a distribution $f^{(N)}(\V)$ with $\V\in\R^N$, we can 
define the $k$ particle marginal as
\[
 f_k^{(N)}(v_1,\ldots,v_k)=\int f^{(N)}(\V)\prod_{i=k+1}^N dv_i \ .
\]

\begin{defin} 
A sequence of probability distributions $\{ f^{(N)}(\mathbf{v}) 
\}_{N=1}^\infty$ on $\R^N$ is said to be \textbf{chaotic} if, $\forall 
k \geq 1$, we have
\begin{equation*}
\lim_{N \to \infty} {f_k^{(N)}(v_1,...,v_k)} = \lim_{N \to 
\infty} {\prod_{j=1}^k {f_1^{(N)}(v_j)}} \ ,
\end{equation*}
where the above limit is taken in the weak sense.
\end{defin}
Consider now a chaotic family of initial conditions $f^{(N)}(\V)$ for eq.~\eqref{E:mast}. With a simple generalization of the proof by Kac \cite{kac} 
(see also \cite{McK}), we can show that the solution at time $t$, 
$f_t^{(N)}(\V)$, is also chaotic. This property is called propagation of 
chaos. It follows that if we define
\[
 \overline{f}_t(v_1)=\lim_{N\to\infty}\int f_t^{(N)}(\V)\,dv_2\cdots dv_N \ ,
\]
eq.~\eqref{E:mast} gives rise to the effective evolution for $\overline{f}_t$, 
that is

\begin{thm}\label{T:Bolt}
$\overline{f}_t(v)$ is the solution of the following `Boltzmann Equation': 
\begin{align*}
\frac{\partial \overline{f}_t(v)}{\partial t} &= 2 \lambda \dashint{ d\theta 
\int{dw[\overline{f}_t(v \cos{\theta}+w \sin{\theta})\overline{f}_t(-v 
\sin{\theta}+w \cos{\theta})-\overline{f}_t(v)\overline{f}_t(w)]} }\\
&+ \mu [ \int{dw \dashint{d \theta g(-v \sin{\theta}+w \cos{\theta}) 
\overline{f}_t(v \cos{\theta}+w \sin{\theta})}}-\overline{f}_t(v)] 
\end{align*}
with $\overline{f}_0(v)$ as initial condition.
\end{thm}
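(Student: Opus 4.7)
The plan is to derive an evolution equation for the one-particle marginal $f_1^{(N)}$ by integrating the master equation \eqref{E:mast} over $v_2,\ldots,v_N$, using the permutation symmetry of $f_t^{(N)}$ (preserved by the dynamics) to simplify the right-hand side, and then passing to $N\to\infty$ via propagation of chaos.

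First I would treat the Kac collision operator $\lambda N(I-Q)$. Pairs $(i,j)$ with $i,j\ge 2$ contribute trivially after marginalization: the map $(v_i,v_j)\mapsto(v_i^*(\theta),v_j^*(\theta))$ is a Lebesgue-measure-preserving rotation, so integration removes all $\theta$-dependence and the term reduces to $f_1^{(N)}(v_1)$. Only the $N-1$ pairs $(1,j)$ give a nontrivial contribution, and by permutation symmetry they are equal. The combinatorial factor $\lambda N\cdot\frac{N-1}{\binom{N}{2}}=2\lambda$ is $N$-independent and produces the prefactor of the Boltzmann collision term. Similarly, for the thermostat operator $\mu\sum_j(I-R_j)$, terms with $j\ge 2$ vanish after integration: the rotation $(v_j,w)\mapsto(v_j^*(\theta),w_j^*(\theta))$ has unit Jacobian, and $\int g(w_j^*)\,dw_j^*=1$, so $\int R_j[f^{(N)}]\,dv_2\cdots dv_N=f_1^{(N)}(v_1)$. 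Only the $j=1$ piece survives, depending on $v_1$ solely through $f_1^{(N)}$ itself.

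Writing $f_1^{(N)}(v_1)=\int dw\,f_2^{(N)}(v_1,w)=\int dw\dashint d\theta\,f_2^{(N)}(v_1,w)$ (the integrand being $\theta$-independent), I would obtain the closed hierarchy-type equation
\begin{align*}
\frac{\partial f_1^{(N)}(v)}{\partial t}
&=2\lambda\dashint d\theta\int dw\,\bigl[f_2^{(N)}(v\cos\theta+w\sin\theta,-v\sin\theta+w\cos\theta)-f_2^{(N)}(v,w)\bigr]\\
&\quad+\mu\Bigl[\int dw\dashint d\theta\,g(-v\sin\theta+w\cos\theta)\,f_1^{(N)}(v\cos\theta+w\sin\theta)-f_1^{(N)}(v)\Bigr].
\end{align*}
Propagation of chaos, already available from the generalization of Kac's argument invoked in the text, then replaces $f_1^{(N)}\to\overline{f}_t$ and $f_2^{(N)}(v,w)\to\overline{f}_t(v)\overline{f}_t(w)$ in the weak sense, which yields exactly the Boltzmann equation in the statement.

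The main obstacle is justifying the passage to the weak limit underneath the $\theta$-average and the $w$-integration, since a priori chaos only produces a weak limit of marginals. The $\theta$-average is a bounded linear operation, and in the thermostat term the Gaussian $g$ provides a smooth rapidly decaying test function in $w$; combined with standard uniform moment/tightness estimates on $f_t^{(N)}$ (propagated from the chaotic initial family and controlled by the dissipative structure of $\mathcal{G}$ together with the equilibrium Gaussian $\gamma$), these allow the limit to be taken inside the integrals by a dominated-convergence argument, closing the proof.
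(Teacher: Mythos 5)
Your derivation is correct, but it takes a genuinely different route from the paper. The paper proves Theorem~\ref{T:Bolt} via McKean's algebraic version of Kac's argument: $f_t^{(N)}$ is expanded as the power series $e^{-t(\lambda\eg_K+\mu\eg_T)}f_0^{(N)}$, absolute convergence for small $t$ is controlled by Lemma~\ref{L:ac}, the limiting operator $\lambda\Gamma_K+\mu\eg_T^*$ is shown to be a derivation on tensor products (Lemma~\ref{L:der}), and these facts allow the series to be re-summed into the Wild-type series solving the Boltzmann equation, proving propagation of chaos and identifying the limit equation in one stroke. You instead integrate the master equation over $v_2,\ldots,v_N$ to obtain the first BBGKY-hierarchy equation expressing $\partial_t f_1^{(N)}$ in terms of $f_2^{(N)}$ and $f_1^{(N)}$ (your symmetry bookkeeping and the $2\lambda$ combinatorial factor are right, as is the observation that the measure-preserving rotations kill the $i,j\ge 2$ and $j\ge 2$ contributions), and then close the hierarchy by invoking Theorem~\ref{T:chaos}. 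This is cleaner conceptually, and it separates propagation of chaos (used as a black box) from the identification of the limiting dynamics; but it shifts the technical weight onto the interchange of $N\to\infty$ with the time derivative and with the $\theta$- and $w$-integrations, which your final paragraph only sketches. In particular, weak convergence of marginals at each fixed $t$ does not by itself control $\lim\partial_t f_1^{(N)}$; one needs either a weak-in-time (integrated) formulation or uniform equicontinuity in $t$, plus tightness/moment bounds to make dominated convergence work against the rotation-averaged test functions. McKean's series route, which the paper follows step by step, avoids exactly these interchange issues at the price of a more combinatorial argument and an initial restriction to $t<(4\lambda+\mu)^{-1}$ that must then be iterated.
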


Our proof of propogation of chaos, following the argument in \cite{kac}, does not establish the validity of the above equation \emph{uniformly} in $t$. One can hope to achieve this by adapting, to our model, the argument in \cite{MM}, where propagation of chaos uniform in $t$  is shown for the Kac model.

One can linearize the above Boltzmann equation about the ground state and study 
the operator associated with this evolution. It turns out that the Hermite 
polynomials diagonalize both the collision part and the thermostat, with the 
$n$-th degree polynomial $H_n(v)$ yielding eigenvalues $2\lambda(1-2s_{n})$ and 
$\mu(1-s_n)$, respectively, where 
$s_n:=\displaystyle\dashint_0^{2\pi}{\cos^n{\theta}d\theta}$. Thus, the gap is 
$\frac{\mu}{2}$, and the ``second'' gap is $\frac{\lambda}{2}+\frac{5}{8}\mu$, 
which correspond to the $N \to \infty$ limit of the respective gaps found at the 
Master equation level (Theorem~\ref{T:2gap}). Incidentally, the eigenvalue 
$\mu$ found in the latter (see eq.~\eqref{E:gapinf}) does not appear here since 
the single-particle marginal of the corresponding eigenfunction vanishes in the 
limit.
\begin{xrem}
As we will see, the proof of Theorem~\ref{T:conen} (in particular, Proposition~\ref{P:N1}) also proves that the relative entropy associated with the Boltzmann 
equation decays at the same rate $\rho$. \end{xrem}

It is interesting to note that if we define the total kinetic energy $K(f)$ as 
\[
K(f) := \frac12\sum_{i=1}^N \int_{\R^N} v_i^2 f(v_1, \dots, v_N) dv_1 \cdots 
dv_N \ ,
\]
we get, using eq.~\eqref{E:mast}, that

\begin{align*}
\frac{dK}{dt} &= - \mu N K + \frac{\mu}{2} \sum_j \int dw d\V \dashint d\theta 
\biggl(\sum_k v_k^2\biggr)g(w_j^*(\theta))f(\V_{j}(\theta,w)) \crcr
&= - \mu N K + \frac{\mu}{2} \sum_j\biggl(\sum_{k \neq j}\int d\V v_k^2 f(\V) + 
\int dw d\V\,g(w)f(\V) \dashint d\theta (v_j \cos(\theta)+w\sin(\theta))^2 \biggr) \ ,
\end{align*}
where the Kac collision gives no contribution as it preserves the total kinetic energy.
This yields

\begin{equation}\label{E:New}
\frac{dK}{dt} = -\frac{\mu}{2}\left(K- \frac{N}{2\beta}\right) \ .
\end{equation}
One can interpret eq.~\eqref{E:New} as Newton's law of cooling.
This law, however,  is usually stated in terms of the 
{\it temperature} of the system at time $t$, i.e.,
\begin{equation}\label{E:temp}
 \frac12 T(t) := \frac{K(t)}{N} \ .
\end{equation}
  It is far from clear that 
eq.~\eqref{E:temp} can be used to define the temperature of a state far from 
equilibrium. To make such an identification one would have to show that the full 
distribution $f_t(\V)$ is close, in a meaningful sense, to a Maxwellian with 
temperature $T(t)$. In general we see no reason why this should be true. We 
believe, however,  that in the case of an `infinitely slow' transformation, i.e. 
the case where $\mu$ is very small relative to $\lambda$, the collisions provide 
enough `mixing' to guide the evolution along Maxwellians.

The plan of the paper is the following. In Section~\ref{sec:2.1} the gap is 
computed and in Section~\ref{sec:2.2} the rate of decay of the relative entropy 
is established. In Section~\ref{sec:3} we show propagation of chaos and 
we finish with a few remarks.

\section{Approach to Equilibrium: Proof of Statements} \label{sec:2}

Before proceeding with the study of the approach to equilibrium, we 
observe that by choosing appropriate units of energy, we can set $\beta=1$ without loss of generality.

\subsection{Approach to Equilibrium in $L^2$} \label{sec:2.1}

In this section we study the lower part of the spectrum of the operator $\el$ 
defined in eq.~\eqref{E:hmast} acting on the Hilbert space 
$\cX=L^2(\R^N,\gamma(\V)d\V)$. To distinguish the action of the 
thermostat from that of the Kac collisions we define the operators
\[ 
\el_T:= \displaystyle\sum_{j=1}^{N}{(I-T_j)} \qquad\qquad \el_K:=  N 
(I-Q) 
\]
so that
\[
\el= \mu\el_T + \lambda\el_K.
\]
It is easy to see that the operator $\el$ for the evolution of $h$ is 
self-adjoint on $\cX$. Moreover $\el$ preserves the subspace of $\cX$ 
formed by the functions symmetric under permutation of the variables. 

To begin, we report some known or simple results on the spectra of $\el_K$ 
and $\el_T$. 
We say that a function $h(\V)$ is {\it radial} if it depends only on 
$r^2=\sum_i v_i^2$. We call $\cX_r$ the subspace of $\cX$ of radial 
functions, and $\cX^\perp$ the subspace of functions 
orthogonal to the constant function, i.e. $\cX^\perp=\{h\in \cX\,|\, 
\langle h,1 \rangle=0\}$. We have

\begin{lem} \label{L:elK} \hspace*{\fill} 
\begin{itemize}
\item[$\bullet$] $\el_K \geq 0$, $\el_T \geq 0$.
\item[$\bullet$] $\el_K[h]=0 \Leftrightarrow h\in\cX_r$, and $\el_T[h]=0 \Leftrightarrow h=\text{constant}$.
\end{itemize}
\end{lem}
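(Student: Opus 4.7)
The plan is to present both parts as instances of a single observation: each elementary averaging operator entering $\el_K$ and $\el_T$ is a self-adjoint contraction whose defect from the identity is (the image of) an orthogonal projection onto the fixed-point subspace of an $SO(2)$-action preserving the Gaussian measure $\gamma \dd\V$. Positivity then falls out for free, and the kernel is characterized by the intersection of all these fixed-point subspaces.

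For $\el_K$, the map $h \mapsto h \circ \V_{i,j}(\theta)$ is unitary on $\cX$ because $\gamma$ is $SO(N)$-invariant. The angular average
\[
P_{ij}[h] := \dashint_0^{2\pi} h(\V_{i,j}(\theta)) \dd\theta
\]
is therefore the Haar-average of the $SO(2)$-action in the $(v_i, v_j)$-plane, hence an orthogonal projection onto the subspace of functions invariant under that rotation. Consequently $I - P_{ij} \ge 0$ and $\el_K = \tfrac{2}{N-1}\sum_{i<j}(I - P_{ij}) \ge 0$. If $\el_K h = 0$, every non-negative summand vanishes, forcing $P_{ij}h = h$ for all $i<j$; since these coordinate-plane rotations generate $SO(N)$ and $SO(N)$-invariant functions on $\R^N$ are exactly the radial ones, $h \in \cX_r$. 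The converse is immediate because radial functions are manifestly $P_{ij}$-invariant.

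The thermostat operator requires one additional bookkeeping step because $T_j$ is phrased through the auxiliary variable $w$. The plan is to introduce the enlarged Hilbert space $\K := L^2(\R^{N+1}, \gamma(\V) g(w) \dd\V \dd w)$, the isometric embedding $M: \cX \to \K$ defined by $(Mh)(\V, w) := h(\V)$, and the orthogonal projection $\Pi_j$ on $\K$ obtained by averaging the $SO(2)$-rotation in the $(v_j, w)$-plane (again measure preserving). Unpacking definitions gives
\[
T_j = M^*\, \Pi_j\, M = (\Pi_j M)^*(\Pi_j M),
\]
so $0 \le T_j \le I$, whence $I - T_j \ge 0$ and $\el_T \ge 0$. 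For the kernel, $\el_T h = 0$ forces $T_j h = h$ for each $j$, i.e. $\|\Pi_j Mh\| = \|Mh\|$, which means $Mh$ lies in the range of $\Pi_j$. Now $Mh$ is independent of $w$, while $\Pi_j$-invariance means it depends on $(v_j, w)$ only through $v_j^2 + w^2$; the only functions meeting both conditions are independent of $v_j$. Running this for every $j$ gives $h$ constant, and the converse is trivial.

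Nothing here is a genuine obstacle, as one expects for a foundational lemma; the only mildly non-obvious move is the identification $T_j = M^*\Pi_j M$, which is what lets the auxiliary-variable bookkeeping be disposed of in a single line and immediately yields both positivity and the characterization of $\ker \el_T$.
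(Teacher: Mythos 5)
Your proof is correct, and it takes a genuinely more structural route than the paper's. The paper simply invokes Kac's polarization identities, writing $2\langle(I-Q)h,h\rangle$ and $2\langle\sum_j(I-T_j)h,h\rangle$ explicitly as sums of non-negative integrals of the form $\dashint\!\int|h(\cdot\,\text{rotated})-h|^2$, from which positivity and the kernel characterizations are read off (the proof is terse and leaves the kernel step implicit). You instead observe that each angular average $P_{ij}$ is the Haar average of a unitary $SO(2)$-action, hence an orthogonal projection, so $I-P_{ij}\ge 0$ directly; for $\el_T$, the identity $T_j=M^*\Pi_j M$ with $M$ an isometry is exactly the device needed to dispose of the auxiliary $w$-variable, and it simultaneously delivers $0\le T_j\le I$ and, via $\|\Pi_j Mh\|=\|Mh\|\Rightarrow\Pi_j Mh=Mh$, the kernel characterization. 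The two proofs are really the same computation in different dress — the Kac identity is the explicit form of $\langle(I-P)h,h\rangle=\tfrac12\dashint\|U_\theta h-h\|^2\,\dd\theta$ — but your version makes the positivity argument a one-liner from general principles, while the paper's version displays the quadratic form explicitly, which is closer to what later gap estimates actually use. Your handling of the kernel, in particular the argument that $\Pi_j$-invariance plus $w$-independence forces $v_j$-independence, is more complete than what the paper spells out; the only point worth flagging is that you should say $N\ge 2$ when invoking that coordinate-plane rotations generate $SO(N)$, though this is implicit throughout the paper.
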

  
\begin{proof}
All claims follow from the following observations: 
\[ 
 2 \langle (I-Q)h, h \rangle = 
\frac{1}{\binom{N}{2}}\sum_{i<j}{\dashint{d\theta 
\int_{\R^N}{|h(\mathbf{v}_{i,j}(\theta))-h(\mathbf{v})|^2\gamma 
d\mathbf{v}}}} \geq 0 
\]
\[
2 \langle \sum_j{(I-T_j)}h,h \rangle = \sum_j \bigl( \dashint{d\theta \int d\V 
dw g(w) 
\gamma(\V) |h(\V_j(\theta,w))-h(\V)|^2}\bigr) \ge 0
\ ,\]
the first of which is an identity due to Kac \cite{kac}.
\end{proof}
Notice that the Kac operator alone acting on $\mathbb{R}^N$ has a degenerate 
ground state. 

From the above Lemma, we see that the unique equilibrium state 
corresponding to eq.~\eqref{E:hmast} is $h(\V)=1$.

The following Theorem is a direct consequence of the results in \cite{CCL}.

\begin{thm}[\cite{CCL}] \label{T:CCL}
We have that 
\[
\Lambda_N := \inf \{|\langle h, \el_K h \rangle|: ||h||=1, h \perp \cX_r \} = 
\frac{1}{2} \frac{N+2}{N-1} 
\]
and the corresponding eigenfunction is 
$\sum_{j=1}^{N}{v_j^4}-\frac{3}{N+2}\left( \sum_{j=1}^N v_j^2\right)^2$.
\end{thm}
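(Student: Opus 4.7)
The plan is to reduce the statement to the classical gap bound of \cite{CCL} on the constant-energy sphere by a fiber decomposition in the radial coordinate. The key structural point is that each binary collision $\V\mapsto\V_{i,j}(\theta)$ preserves $r^2=\sum_i v_i^2$, so the operator $\el_K=N(I-Q)$ commutes with multiplication by any function of $r$, and therefore preserves the radial variable.

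First, I would write
\[
\cX \;=\; L^2(\R^N,\gamma(\V)d\V)\;\cong\; \int_0^\infty{}^{\!\!\oplus} L^2\!\left(S^{N-1}(r),d\sigma_r\right)\rho(r)\,dr,
\]
where $\rho(r)$ absorbs the Gaussian weight and the Jacobian $r^{N-1}$. On each fiber $L^2(S^{N-1}(r))$, the Kac operator acts as the standard Kac generator on the sphere (after passing to the uniform probability measure, via a trivial scaling $v_i = r\omega_i$). By the main theorem of \cite{CCL}, this sphere operator has spectral gap $\Lambda_N=\tfrac12\frac{N+2}{N-1}$ on the orthogonal complement of constants, independently of $r$, with eigenfunction (in the $v_i$ coordinates on $S^{N-1}(r)$) proportional to $\sum_j v_j^4-\tfrac{3}{N+2}r^4$.

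Second, I would translate the orthogonality condition $h\perp\cX_r$: a function $h\in\cX$ is orthogonal to every radial function in $\cX$ if and only if its spherical average vanishes,
\[
\int_{S^{N-1}(r)} h\,d\sigma_r \;=\;0\qquad\text{for a.e. }r>0,
\]
i.e., on almost every sphere $h$ is orthogonal to constants. Applying the CCL bound fiberwise and integrating in $r$ against $\rho(r)\,dr$ gives $\langle h,\el_K h\rangle\ge\Lambda_N\|h\|^2$ for all $h\perp\cX_r$.

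Third, the bound is saturated by $\Phi(\V):=\sum_j v_j^4-\tfrac{3}{N+2}(\sum_j v_j^2)^2$: on the sphere of radius $r$ this restricts exactly to the CCL eigenfunction $\sum_j v_j^4-\tfrac{3r^4}{N+2}$, so $\el_K\Phi=\Lambda_N\Phi$ fiberwise and hence globally, and $\Phi\perp\cX_r$ since its sphere averages vanish. Combining the two inequalities yields the theorem. The only real point requiring care is the measurability/integrability of the fiber decomposition; once that is set up, nothing new beyond \cite{CCL} is needed, which is why the authors call the statement a direct consequence.
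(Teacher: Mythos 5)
Your proposal is correct and is precisely what the paper means by ``a direct consequence of the results in \cite{CCL}'': the Kac collisions preserve $r$, so $\el_K$ decomposes as a direct integral over spheres, the orthogonality $h\perp\cX_r$ becomes fiberwise orthogonality to constants, and the CCL spherical gap $\Lambda_N$ with the quartic eigenfunction transfers verbatim. The paper gives no further argument, so there is nothing to contrast; your sketch fills in the (routine) fiber-decomposition details that the authors left implicit.
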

  
To study the spectrum of $\el_T$ we use the Hermite polynomials $H_\alpha(v)$ 
with weight $g(v)$. More precisely, for $\alpha$ integer, we set 
\[
 H_\a(v)=(-1)^\alpha 
e^{\frac{v^2}{2}}\frac{d^\alpha}{dv^\alpha}e^{-\frac{v^2}{2}}
\]
so that
\begin{enumerate}
 \item $H_\alpha(v)$ is a polynomial of degree $\alpha$. Moreover 
$H_\alpha(-v)=(-1)^{\alpha}H_\alpha(v)$.
\item The coefficient of $v^\alpha$ in $H_\alpha$ is 1.
\item The $H_\alpha$ are orthogonal in $L^2(\R,g dv)$. More precisely
\[
 \int 
H_{\alpha_1}(v)g(v)H_{\alpha_2}(v)dv=\sqrt{2\pi}\alpha_1!\delta_{\alpha_1,
\alpha_2}.
\]
\end{enumerate}

\begin{lem} \label{L:herm}
$H_\alpha(v_j)$ form an orthogonal basis of eigenfunctions for the operator 
$T_j$ and $T_jH_\alpha=s_\alpha H_\alpha$ with $s_\alpha=0$ if $\alpha$ is odd 
while
\[
s_{2\alpha}= \dashint_{0}^{2\pi}{d\theta 
\cos^{2\alpha}{\theta}}=\frac{(2\alpha)!}{2^{2\alpha}\alpha!^2} \ .
\]
\end{lem}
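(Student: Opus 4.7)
The plan is to directly apply $T_j$ to $H_\alpha(v_j)$ by first evaluating the Gaussian integral in $w$ and then the angular average in $\theta$. The cleanest route is through the generating function
\[
\sum_{\alpha\geq 0}\frac{t^\alpha}{\alpha!}H_\alpha(v)=e^{tv-t^2/2},
\]
which is the standard generating identity that follows from the definition $H_\alpha(v)=(-1)^\alpha e^{v^2/2}\frac{d^\alpha}{dv^\alpha}e^{-v^2/2}$.

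Applying $T_j$ to the generating function rather than to each $H_\alpha$ separately, one computes
\[
\int dw\, g(w)\, e^{t(v_j\cos\theta + w\sin\theta)-t^2/2}
= e^{tv_j\cos\theta - t^2/2}\int dw\, g(w)\, e^{t w\sin\theta}
= e^{tv_j\cos\theta - t^2\cos^2\theta/2},
\]
where the last step uses the standard Gaussian moment generating function $\int g(w)e^{sw}dw = e^{s^2/2}$ with $s=t\sin\theta$ (recall $\beta=1$). The right-hand side is exactly the generating function evaluated at parameter $t\cos\theta$, so matching coefficients of $t^\alpha/\alpha!$ yields
\[
\int dw\, g(w)\, H_\alpha(v_j\cos\theta + w\sin\theta) = \cos^\alpha(\theta)\, H_\alpha(v_j).
\]
Averaging in $\theta$ then gives $T_j H_\alpha(v_j) = s_\alpha H_\alpha(v_j)$ with $s_\alpha = \dashint_0^{2\pi}\cos^\alpha\theta\, d\theta$.

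The explicit evaluation of $s_\alpha$ is routine: for odd $\alpha$ the integrand has period $2\pi$ and is odd under $\theta\mapsto\theta+\pi$, so $s_\alpha=0$; for $\alpha=2m$ one uses the binomial expansion of $\cos^{2m}\theta=2^{-2m}(e^{i\theta}+e^{-i\theta})^{2m}$ and extracts the constant term to obtain $s_{2m}=\binom{2m}{m}/2^{2m}=\frac{(2m)!}{2^{2m}(m!)^2}$. Finally, the statement that the $H_\alpha(v_j)$ form an orthogonal basis is just the tensorized form of the classical orthogonality $\int H_{\alpha_1}(v)H_{\alpha_2}(v)g(v)dv = \sqrt{2\pi}\,\alpha_1!\,\delta_{\alpha_1,\alpha_2}$ recorded above the lemma, together with completeness of Hermite polynomials in $L^2(\R,g\,dv)$.

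There is no real obstacle: the only mildly nontrivial point is noticing that the Gaussian moment generating function absorbs the $w\sin\theta$ term and combines with $-t^2/2$ to produce $-t^2\cos^2\theta/2$, which is precisely what makes the eigenvalue come out as a power of $\cos\theta$. Once that observation is in hand, the rest is bookkeeping.
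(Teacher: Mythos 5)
Your proof is correct, but it follows a genuinely different route from the paper's. The paper exploits self-adjointness: it writes $\int T[H_\alpha]H_n\,g\,dv$ two ways by the $(v,w)$-exchange symmetry of the rotation, and since $T$ maps degree-$\alpha$ polynomials to degree-$\alpha$ polynomials, the two expressions force the integral to vanish unless $n=\alpha$; the eigenvalue $s_\alpha$ is then read off by matching the leading coefficient of $v^\alpha$. You instead push the generating function $e^{tv-t^2/2}$ through the Gaussian average in $w$, observing that $-t^2/2 + t^2\sin^2\theta/2 = -(t\cos\theta)^2/2$, which yields the pointwise-in-$\theta$ identity $\int g(w)H_\alpha(v\cos\theta+w\sin\theta)\,dw = \cos^\alpha\theta\, H_\alpha(v)$ before any angular averaging. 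Your route is more computational but also gives more: the intermediate Mehler-type identity is stronger than the eigenvalue statement alone and makes the eventual appearance of $s_\alpha = \dashint\cos^\alpha\theta\,d\theta$ transparent. The paper's argument is slicker in that it never evaluates any integral except the leading-coefficient check, and it generalizes immediately to other self-adjoint degree-preserving operators (this is precisely what is abstracted into Lemma~\ref{L:hom}). Both proofs are sound; yours is self-contained and perhaps more illuminating about \emph{why} the eigenvalue is a power of $\cos\theta$, while the paper's is shorter and sets up the machinery reused later.
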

\begin{proof}
We drop the subscript $j$ here for ease of notation. First, we observe that 
\begin{align*}
\int{T[H_\alpha(v)] H_n(v) g(v)dv} &=\int{dw dv g(v) g(w) H_n(v) \dashint{d\theta H_\alpha(v\cos{\theta}+w\sin{\theta})}} \crcr
&= \int{dw dv g(v) g(w) H_\alpha(v) \dashint{d\theta H_n(v\cos{\theta}+w\sin{\theta})}}.
\end{align*}
Since $T[H_\alpha(v)]$ is a polynomial in $v$ of degree $\alpha$, the first line 
implies that $\int{T[H_\alpha(v)] H_n(v) g(v)dv}=0$ if $n>\alpha$. Likewise, the 
second line implies that $\int{T[H_\alpha(v)] H_n(v) g(v)dv}=0$ if $\alpha>n$. 
Thus, \[ T[H_\alpha(v)]=c_\alpha H_\alpha(v) \ .\] By equating the coefficients 
of $v^\alpha$ in the above, we get that 
$c_\alpha=\dashint{\cos^{\alpha}{\theta}}=s_\alpha$.
\end{proof}

Note that $s_{2(\alpha+1)}<s_{2\alpha}$ and $s_{2\alpha}\to 0$ as 
$\alpha \to\infty$. Since $\el_T$ is just the direct sum of $(I-T_j)$ we get

\begin{cor} \label{C:elT}
The functions
\[
 H_{\ba}(\V):=\prod_{i=1}^N H_{\a_i}(v_{i}) \ ,
\]
where $\ba=(\a_1,\ldots,\a_N)$, is an eigenfunction of $\el_T$ with 
eigenvalue
\[
 \sigma_{\ba}:=\sum_i{(1-s_{\a_i})}.
\]
The set $\left\{H_{\ba}\right\}_{\ba\geq 0}$ form an orthogonal 
basis of eigenfunctions for $\el_T$ in $\cX$. In particular $\el_T>0$ on 
$\cX^\perp$.
\end{cor}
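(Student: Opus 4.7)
The plan is to deduce everything from Lemma~\ref{L:herm} by exploiting the product structure of both the operator $\el_T$ and the measure $\gamma(\V)d\V$. There are three things to establish: the eigenvalue identity, the orthogonality and completeness of $\{H_\ba\}$, and strict positivity on $\cX^\perp$.

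First I would compute $T_j H_\ba$ directly. The operator $T_j$ acts only on the $j$-th coordinate, mixing $v_j$ with an auxiliary variable $w$ that is integrated against $g(w)$; every factor $H_{\a_i}(v_i)$ with $i \neq j$ depends on none of these variables and pulls out of both the $w$-integral and the $\theta$-average. What remains inside is precisely the one-dimensional operator of Lemma~\ref{L:herm} applied to $H_{\a_j}$, yielding $s_{\a_j}H_{\a_j}(v_j)$. Hence $(I - T_j)H_\ba = (1-s_{\a_j})H_\ba$, and summing over $j$ gives $\el_T H_\ba = \sigma_\ba H_\ba$.

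Orthogonality is immediate: since $\gamma(\V) = \prod_i g(v_i)$, the inner product $\langle H_\ba, H_{\ba'}\rangle$ factors into a product of one-dimensional inner products, each of which vanishes unless $\a_i = \a'_i$ by the orthogonality relation stated just before Lemma~\ref{L:herm}. For completeness of $\{H_\ba\}$ in $\cX$, I would invoke the classical fact that $\{H_\a\}_{\a \geq 0}$ is a complete orthogonal family in $L^2(\R, g(v)dv)$ and then pass to the $N$-fold tensor product; this is standard and can be justified by approximating a square-integrable function first in each variable separately and then applying Fubini.

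Finally, strict positivity on $\cX^\perp$ follows from the spectral decomposition together with the observation that $s_0 = 1$ while $s_\a < 1$ for every $\a \geq 1$ (odd $\a$ give $s_\a=0$, even $\a\geq 2$ give $s_{2\alpha}=\binom{2\alpha}{\alpha}/2^{2\alpha} < 1$). Thus $1 - s_{\a_i} \geq 0$ with equality iff $\a_i = 0$, so $\sigma_\ba = 0$ precisely when $\ba = 0$, i.e. when $H_\ba$ is constant. Expanding $h \in \cX^\perp$ as $h = \sum_{\ba \neq 0} c_\ba H_\ba$ gives $\langle h, \el_T h\rangle = \sum_{\ba \neq 0} |c_\ba|^2 \sigma_\ba \|H_\ba\|^2$, which is strictly positive whenever $h \neq 0$. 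I do not anticipate any substantial obstacle; the only step that is not a direct consequence of Lemma~\ref{L:herm} is completeness of the product family, which is routine.
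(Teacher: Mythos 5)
Your argument is correct and follows the same route the paper implicitly takes: the corollary is a direct tensor-product consequence of Lemma~\ref{L:herm} (the paper's one-line ``since $\el_T$ is just the direct sum of $(I-T_j)$''), and you have simply made explicit the factorization of $T_j$ across coordinates, the product structure of $\gamma$, and the positivity argument via the spectral decomposition. All the supporting facts you invoke ($s_0=1$, $s_\a<1$ for $\a\geq 1$, completeness of the Hermite basis in each factor) are standard and correctly applied.
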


To study the spectral gap, we need to understand the action of $\el_K$ on 
products of Hermite polynomials $H_{\ba}$. We first state and prove the 
following lemma which helps us restrict our investigation to even polynomials. 

\begin{lem} \hspace*{\fill}
\begin{itemize}
\item[$\bullet$] Any eigenfunction of $\mu \el_T+\lambda \el_K$ is either even or odd in each variable $v_i$.
\item[$\bullet$] If $E$ is an eigenvalue of $\mu \el_T+\lambda \el_K$, with an 
eigenfunction that is odd in some $v_i$, we have that $E \geq 2\lambda+\mu$. 
\end{itemize}
\end{lem}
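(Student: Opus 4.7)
Both bullets rely on the commutation of $\el$ with the parity flips $P_i\colon (v_1,\dots,v_i,\dots,v_N)\mapsto (v_1,\dots,-v_i,\dots,v_N)$. For the first bullet I would check that each $T_j$ and each $Q_{k\ell}$ commutes with every $P_i$: when the index $i$ does not appear in the operator the commutation is immediate; when $i=j$, the identity $T_iP_i=P_iT_i$ follows from the substitution $w\mapsto -w$ in the Gaussian integral defining $T_i$; and when $i\in\{k,\ell\}$, the identity $Q_{k\ell}P_i=P_iQ_{k\ell}$ follows from $\theta\mapsto -\theta$ in the angular average (together with $g$ and $\gamma$ being even). Consequently $\el$ commutes with the abelian group $(\Z/2)^N$ generated by the $P_i$'s, so every $\el$-eigenspace is jointly diagonalizable with respect to all the $P_i$'s, and the eigenfunctions may be chosen to have definite parity in each variable.

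For the second bullet, by permutation invariance of $\el$ I may assume that $h$ is odd in $v_1$. The plan is to prove separately the two lower bounds $\langle h,\el_T h\rangle\geq\|h\|^2$ and $\langle h,\el_K h\rangle\geq 2\|h\|^2$; adding them gives $E=\langle h,\el h\rangle/\|h\|^2\geq 2\lambda+\mu$. For the first bound I would show $T_1h=0$: the simultaneous substitution $\theta\mapsto\theta+\pi,\ w\mapsto -w$ preserves $g(w)\,dw\,d\theta$ but negates the argument $v_1\cos\theta+w\sin\theta$ appearing in the first slot of $h$, so by oddness $T_1h=-T_1h$. Hence $(I-T_1)h=h$, and nonnegativity of the remaining $(I-T_j)$'s (Lemma~\ref{L:elK}) finishes the estimate.

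For the $\el_K$ bound I would use that each pair operator $Q_{ij}$ is an \emph{orthogonal projection} onto the subspace of $SO(2)$-invariant functions in the $(v_i,v_j)$-plane: self-adjointness is the change of variables $\theta\mapsto -\theta$ and $Q_{ij}^2=Q_{ij}$ follows from composing two planar rotations and averaging. The range of $Q_{1j}$ consists of functions depending on $(v_1,v_j)$ only through $v_1^2+v_j^2$, and is therefore even in $v_1$. Since $\gamma$ is also even in $v_1$ while $h$ is odd, $h$ is orthogonal to this range, so $Q_{1j}h=0$ for every $j=2,\dots,N$. Writing
\[
\el_K=\frac{2}{N-1}\sum_{i<j}(I-Q_{ij})
\]
and discarding the nonnegative pair contributions that do not involve index $1$ (again by Lemma~\ref{L:elK}), the $N-1$ pairs of the form $(1,j)$ each contribute $\|h\|^2$ to the quadratic form, giving $\langle h,\el_K h\rangle\geq 2\|h\|^2$.

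The only step that deserves some care is the identification of the range of $Q_{1j}$: one must use the \emph{full} $SO(2)$-invariance to force a dependence only on $v_1^2+v_j^2$, and hence evenness in $v_1$; the $\pi$-rotation identity $k(-v_1,-v_j)=k(v_1,v_j)$ on its own is too weak. Once this is in place, everything else is a short sequence of symmetry manipulations.
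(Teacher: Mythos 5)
Your proposal is correct and is essentially the paper's argument, with minor repackaging. The first bullet is identical: $\el$ commutes with the reflections $S_j$. For the second bullet, both proofs hinge on the same two facts when $h$ is odd in $v_1$: $T_1 h = 0$ and $Q_{1j}h = 0$ for every $j \neq 1$. The paper establishes $Q_{1j}h=0$ by a chain of changes of variables (polar coordinates, $\theta\mapsto\pi-\theta$, then oddness), while you obtain it by noting that the range of the self-adjoint projection $Q_{1j}$ consists of functions of $v_1^2+v_j^2$, hence even in $v_1$, so oddness forces orthogonality---a slightly cleaner way to see the same thing. The paper then substitutes into the eigenvalue equation and bounds the surviving terms by $\|Q_{ij}\|\le 1$, $\|T_i\|\le 1$; you instead split the quadratic form $\langle h,\el h\rangle$ and discard the nonnegative pair contributions (which is equivalent, since $\langle h,(I-Q_{ij})h\rangle\ge 0$ is the same statement as $\langle h,Q_{ij}h\rangle\le\|h\|^2$). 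Both routes yield the identical bound $E\geq 2\lambda+\mu$, and your quadratic-form version has the small advantage of applying to any odd $h$, not just eigenfunctions.
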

\begin{proof}
The first part can be seen by noting that the operator $\mu \el_T+\lambda \el_K$ 
commutes with the reflection operator $S_j [h](\V):=h(...,-v_j,...)$.
For the second part, say $(\mu \el_T+\lambda \el_K)h=Eh$, with $S_1[h]=-h$. Then $T_1[h]=0$. In addition, for any $i \neq 1$, 
\begin{align*}
\dashint{d\theta h(\V_{i,1}(\theta))} &= \dashint{d\theta h(v_i \cos{\theta}+v_1 \sin{\theta},...,-v_i \sin{\theta}+v_1 \cos{\theta},...)} \crcr
&= \dashint{d\theta h(\sqrt{v_i^2+v_1^2}\cos{(\varphi-\theta)},...,\sqrt{v_i^2+v_1^2} \sin{(\varphi-\theta)},...)} \crcr
&= \dashint{d\theta h(\sqrt{v_i^2+v_1^2}\cos{\theta},...,\sqrt{v_i^2+v_1^2} \sin{\theta},...)} \crcr
&= \dashint{d\theta h(-\sqrt{v_i^2+v_1^2}\cos{\theta},...,\sqrt{v_i^2+v_1^2} 
\sin{\theta},...)} \;\;\;\;\;\;\;\; 
\text{(taking $\theta \to \pi-\theta$)}\crcr
&=0 \;\;\;\;\;\;\;\; \text{(using that $S_1[h]=-h$)} \ .
\end{align*}
Thus, 
\[
\lambda N h - \lambda \frac{N}{\binom{N}{2}} \sum_{i<j, i,j \neq 
1}{\dashint{d\theta h(\V_{i,j}(\theta))}} + 
N \mu h - \mu \sum_{i \neq 1}{T_i[h]} = Eh\]
or
\[(\lambda N + \mu N - E) \leq \lambda \frac{N}{\binom{N}{2}} \binom{N-1}{2} + \mu(N-1) \ , \]
which proves the claim. 
\end{proof}

We will thus restrict our attention to the space of functions that are even in 
all variables $v_i$ and show that the eigenfunction for $\Delta_N$ and 
$\Delta_N^{(2)}$ lie in this space. To this end we define
\[
L_{2l}={\rm span}\Bigl\{H_{2\ba}\,\Big|\, \sum_{i=1}^N 
2\a_i=2l\Bigr\}.
\]
Moreover we set 
\[
 |\ba|=\sum_{i=1}^N\a_i
\]
and
\[
\Xi:=\{\ba: \sum_{i<j}{\alpha_i \alpha_j} \neq 0\} \text{, that is the set of 
} \ba \text{ in which at least two entries are non-zero.}
\]
\begin{lem} \label{L:mon}
In each $L_{2l}$ the eigenvalues of $\el_T$ are given by 
$\sigma_{2\ba}=\sum_j{(1-s_{2\alpha_j})}$, where $|\ba|=l$. It follows 
that
\begin{itemize}
\item[$\bullet$] The smallest eigenvalue in each $L_{2l}$ is $1-s_{2l}$ and the 
corresponding eigenfunctions are precisely linear combinations of $H_{2\ba}(\V)$ 
with $\ba=(0,\ldots,l,\ldots,0)$. 

\item[$\bullet$] $\displaystyle\min_{\ba \in \Xi}{\sigma_{2\ba}}=1$. Morover, 
the minimum is reached when two of the $\alpha_i$'s are 1 and the rest are $0$.
\end{itemize}
\end{lem}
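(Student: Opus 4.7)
My plan is to reduce both bullets to a single monotonicity fact about the sequence $s_{2\alpha}=\dashint_0^{2\pi}\cos^{2\alpha}\theta\, d\theta$: namely that $s_0=1$, $s_2=\tfrac12$, and $s_{2\alpha}$ is strictly decreasing in $\alpha\geq 0$ (tending to $0$). This is a direct consequence of the explicit formula from Lemma~\ref{L:herm}, or equivalently of the ratio $s_{2(\alpha+1)}/s_{2\alpha}=(2\alpha+1)/(2\alpha+2)<1$ obtained via integration by parts.

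The opening statement is then immediate from Corollary~\ref{C:elT}: by definition $L_{2l}$ is spanned by the orthogonal family $\{H_{2\ba}:|\ba|=l\}$, and each $H_{2\ba}$ is an eigenfunction of $\el_T$ with eigenvalue $\sigma_{2\ba}=\sum_{j}(1-s_{2\alpha_j})$. Thus the spectrum of $\el_T|_{L_{2l}}$ is exactly $\{\sigma_{2\ba}:|\ba|=l\}$, and the remaining task is purely combinatorial minimization over such $\ba$.

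For the first bullet I would fix $\ba$ with $|\ba|=l$ and let $k$ denote the number of nonzero entries of $\ba$; since $s_0=1$, zero entries contribute nothing, so $\sigma_{2\ba}=\sum_{j:\,\alpha_j\neq 0}(1-s_{2\alpha_j})$. If $k=1$, the unique nonzero entry equals $l$ and $\sigma_{2\ba}=1-s_{2l}$; this value is attained by the $N$ functions $H_{2l}(v_i)$, $i=1,\dots,N$, which span the corresponding eigenspace. If $k\geq 2$, each nonzero term satisfies $1-s_{2\alpha_j}\geq 1-s_2=\tfrac12$, yielding $\sigma_{2\ba}\geq k/2\geq 1>1-s_{2l}$, which rules this case out.

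For the second bullet, $\ba\in\Xi$ is exactly the condition $k\geq 2$, so the same bound gives $\sigma_{2\ba}\geq k/2\geq 1$. Equality forces both $k=2$ and $s_{2\alpha_j}=s_2$ for each nonzero entry, and strict monotonicity of $s_{2\alpha}$ on $\alpha\geq 1$ then forces $\alpha_j=1$. No substantive obstacle arises; the only point worth flagging is that the estimate is $k/2$ rather than $N/2$ precisely because $s_0=1$, so the proof must distinguish zero from nonzero entries of $\ba$ before invoking the bound $1-s_{2\alpha_j}\geq \tfrac12$.
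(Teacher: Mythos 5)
Your argument is correct, and for the first bullet it genuinely departs from the paper's route. The paper introduces the continuous function $J(x)=\dashint_0^{2\pi}\cos^{2x}\theta\,d\theta$, observes it is strictly convex, and writes $\alpha_i = \frac{\alpha_i}{l}\cdot l + \sum_{j\neq i}\frac{\alpha_j}{l}\cdot 0$ to apply Jensen; summing over $i$ gives $\sum_j s_{2\alpha_j}\leq s_{2l}+(N-1)s_0$ with equality iff exactly one $\alpha_i=l$. Your argument avoids convexity altogether: you split on the number $k$ of nonzero entries, note $s_0=1$ kills the zero entries, and bound the remaining ones by the single inequality $1-s_{2\alpha_j}\geq 1-s_2=\tfrac12$ coming from monotonicity (a fact the paper already records just before Corollary~\ref{C:elT}). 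This gives $\sigma_{2\ba}\geq k/2\geq 1 > 1-s_{2l}$ for $k\geq 2$, with the $k=1$ case giving exactly $1-s_{2l}$. Your approach is more elementary and also unifies the two bullets: the paper's own proof of the second bullet is in fact a one-line appeal to exactly the monotonicity-plus-$s_2=\tfrac12$ estimate you use throughout. The convexity argument is more powerful in the sense that it directly compares $\sum J(\alpha_i)$ to the extremal configuration, but here the cruder threshold bound suffices because the gap between $1$ and $1-s_{2l}$ is all that is needed. The one place to be explicit (which you flag) is the distinction between zero and nonzero entries before invoking $1-s_{2\alpha_j}\geq\tfrac12$, since that bound fails for $\alpha_j=0$; with that noted, the proof is complete and correct.
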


\begin{proof}
To prove the first statement, we start by observing that the function 
$J(x):=\displaystyle\dashint_0^{2\pi}{\cos^{2x}{\theta} d\theta}$ is strictly 
convex in $x$. Consider $\ba$ such that $|\alpha|=l$. We need to show that 
\[\sum{J(\alpha_i)} \leq J(l)+(N-1)J(0)\] and that equality is attained if and 
only if $\ba=(0,...l,...0)$.
By convexity, we have that

\begin{align*}
J(\alpha_i) &= J \biggl(\frac{\alpha_i}{l}l+ \sum_{j \neq i}{\frac{\alpha_j}{l}}0 \biggr) \crcr
 &\leq \frac{\alpha_i}{l}J(l)+\sum_{j \neq i}{\frac{\alpha_j}{l}} J(0) \ . 
\end{align*}
Summing the above over $i$, we get the result.

The second claim follows from the monotonicity of the $s_{2\alpha}$ and the fact that $s_2 = \frac{1}{2}$.
\end{proof}

\begin{proof}[Proof of Proposition~\ref{P:1gap}]
By Corollary~\ref{C:elT} and Lemma~\ref{L:mon}, we have that $\el_T \geq 1/2$ 
and thus $\el \geq 1/2$ on $\cX^\perp$. On the other hand, $\el 
[\sum{H_2(v_i)}]=\el_T [\sum{H_2(v_i)}]=\frac{\mu}{2} (\sum{H_2(v_i)})$ since 
$\sum{H_2(v_i)}$, being a radial function is annihilated by the Kac part. Thus, 
$\Delta_N=\mu/2$ and $h_{\Delta_N}=\sum_{i=1}^N{H_2(v_i)} \in L_2$.
\end{proof} 
 
To compute $\Delta_N^{(2)}$ we need to better understand the action of $\el_K$ 
on the $L_{2l}$. This is done in the following Lemma, which is actually a 
generalization of Lemma~\ref{L:herm}.
  
\begin{lem} \label{L:hom}
Let $A$ be a self-adjoint operator on $L^2(\R^N,\gamma(\V)d\V)$ that preserves 
the space $P_{2l}$, of homogeneous even polynomials in $v_1,...,v_N$ of degree $2l$. 
If
\[
A(v_1^{2\alpha_1}...v_N^{2\alpha_N})=\sum_{|\underline{\beta}|=|\ba|} c_{ 
\underline{\beta}}v_1^{2\beta_1}...v_N^{2\beta_N} \ ,
\] 
we get
\[
A(H_{2\alpha_1}(v_1)...H_{2\alpha_N}(v_N))=\sum_{|\underline{\beta}|=|\ba|}
{c_{\underline{\beta}}H_{2\beta_1}(v_1)...H_{2\beta_N}(v_N)} \ .
\]
\end{lem}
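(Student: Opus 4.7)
The plan is to expand $A[H_{2\ba}]$ in the orthogonal basis $\{H_{2\gamma}\}$ and determine the coefficients via self-adjointness and degree considerations. A key preliminary fact I would record is that, since each $H_{2\gamma}$ is an even polynomial of total degree $2|\gamma|$ with leading monomial $v_1^{2\gamma_1}\cdots v_N^{2\gamma_N}$, a dimension count combined with the orthogonality from Corollary~\ref{C:elT} shows that $\{H_{2\gamma}:|\gamma|\le l\}$ is an orthogonal basis of the space of even polynomials of total degree at most $2l$. In particular, $H_{2\ba}$ is orthogonal to every even polynomial of degree strictly less than $2|\ba|$.

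Writing $A[H_{2\ba}] = \sum_{\gamma} a_\gamma H_{2\gamma}$ (a finite sum), I would read the hypothesis as saying that $A$ is degree-preserving on polynomials, i.e., it preserves each $P_{2k}$, so $A[H_{2\ba}]$ has degree at most $2|\ba|$; this already forces $a_\gamma=0$ whenever $|\gamma|>|\ba|$. For $|\gamma|<|\ba|$, self-adjointness gives
\[
a_\gamma \|H_{2\gamma}\|^2 = \langle A[H_{2\ba}],H_{2\gamma}\rangle = \langle H_{2\ba},A[H_{2\gamma}]\rangle,
\]
and since $A[H_{2\gamma}]$ has degree at most $2|\gamma|<2|\ba|$, the orthogonality recorded above forces this inner product to vanish.

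To compute $a_{\underline{\beta}}$ for $|\underline{\beta}|=|\ba|$, I would decompose $H_{2\ba}(\V) = v_1^{2\alpha_1}\cdots v_N^{2\alpha_N} + r_\ba(\V)$, where $r_\ba$ has total degree at most $2|\ba|-2$ (every term in the expansion of $\prod_i H_{2\alpha_i}(v_i)$ that is not the leading monomial loses at least two degrees because $H_{2\alpha_i}(v_i) - v_i^{2\alpha_i}$ has degree at most $2\alpha_i-2$). Applying $A$ and using the given action on monomials yields
\[
A[H_{2\ba}] = \sum_{|\underline{\beta}|=|\ba|} c_{\underline{\beta}}\, v_1^{2\beta_1}\cdots v_N^{2\beta_N} + A[r_\ba],
\]
where $A[r_\ba]$ has degree at most $2|\ba|-2$ and therefore contributes only to $a_\gamma$ with $|\gamma|<|\ba|$. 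Substituting $v_1^{2\beta_1}\cdots v_N^{2\beta_N} = H_{2\beta_1}(v_1)\cdots H_{2\beta_N}(v_N) + (\text{even polynomial of degree}\le 2|\ba|-2)$ into each term of the sum then identifies the top-degree Hermite components as $\sum_{|\underline{\beta}|=|\ba|} c_{\underline{\beta}} H_{2\beta_1}(v_1)\cdots H_{2\beta_N}(v_N)$, as claimed.

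The main subtlety is interpretive rather than technical: the argument requires $A$ to respect the full filtration of polynomials by degree and not merely to preserve the single space $P_{2|\ba|}$. This is the natural way to read the statement, and is in any case automatic for the intended application to $\el_K=N(I-Q)$, since $Q$ is an average of rotations and therefore preserves each homogeneous subspace $P_{2k}$.
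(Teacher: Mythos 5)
Your argument is essentially the paper's: you establish (via self-adjointness plus degree preservation) that $A[H_{2\ba}]$ expands only over $\{H_{2\underline{\beta}}:|\underline{\beta}|=|\ba|\}$, then identify the coefficients by matching leading monomials, exactly as the paper does by showing $A(L_{2l})\subset L_{2l}$ and equating top-degree terms. Your remark that the hypothesis must be read as $A$ preserving every $P_{2k}$ (not just a single one) is correct and is indeed how the lemma is applied to $\el_K$.
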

\begin{proof}
First, we observe that $A(L_{2l}) \subset L_{2l}$. Indeed, if $f \in L_{2m}$ 
and $g \in L_{2l}$ with $m < l$, we have $\langle Ag,f \rangle=\langle g,Af \rangle=0$ because $Af$ 
contains only monomials of degree at most $2m$. This means that 
\[
A(H_{2\alpha_1}(v_1)...H_{2\alpha_N}(v_N))=\sum_{|\underline{\beta}|=|\ba|}
{k_{\underline{\beta}}H_{2\beta_1}(v_1)...H_{2\beta_N}(v_N)}
\]
and because
\[
A(v_1^{2\alpha_1}...v_N^{2\alpha_N})=\sum_{|\underline{\beta}|=|\ba|}
{c_{\underline{\beta}}v_1^{2\beta_1}...v_N^{2\beta_N}} \ ,
\]
we get that $c_{\underline{\beta}}=k_{\underline{\beta}}$ for any 
$\underline{\beta}$ by equating the coefficients of the term of maximal degree 
$v_1^{2\beta_1}...v_N^{2\beta_N}$. 
\end{proof}
\hspace*{\fill}
\begin{xrems} \hspace*{\fill}
\begin{itemize}
\item[$\bullet$]Since $\el_K$ preserves the spaces $P_{2l}$, the above Lemma applies to it. 
Thus, the action of $\el_K$ on products of Hermite polynomials $H_{2n}(v_i)$ can 
be deduced from its action on products of mononomials $v_i^{2n}$, and the latter 
turns out to be simpler.  
\item[$\bullet$]Note that $L_{2l}$ is invariant under $\el_K$ and thus is invariant 
under $\el$. 
\end{itemize}
\end{xrems}

In preparation for the proof of Theorem~\ref{T:2gap} we note that 
Theorem~\ref{T:CCL} implies that
\[ 
\langle h, \el_K h \rangle \geq \langle h, \Lambda_N (I-B) h \rangle 
\]
where $B$ is the orthogonal projection on radial functions, that is
\[
 B[h](\V)=\int_{S^{N-1}(|\V|)}h({\bf w} )d\sigma ({\bf w}) \ .
\]
where $S^{N-1}(r)$ is the sphere of radius $r$ in $\R^N$ with normalized surface measure  
$d\sigma(\V)$. Setting $\el_R:= 
\Lambda_N (I-B)$ we have
\[ 
\langle h,\el h \rangle \geq \langle h, (\mu\el_T + \lambda\el_R)h \rangle 
\]
so that
\begin{equation} 
\Delta_N^{(2)} \geq \text{ inf} \{ \langle h,(\mu\el_T + \lambda\el_R)h \rangle : ||h||=1, h 
\perp L_0,L_2\} \ ,
\end{equation}
where we have replaced the operator $\el_K$ with the much simpler projection 
$\el_R$. Note, the same reasoning as before shows that the space $L_{2l}$ is 
invariant under $\el_R$. For later use we define
\[
\Gamma(\ba) =\int_{S^{N-1}(1)}v_1^{2\alpha_1}...v_N^{2\alpha_N} d\sigma_1(\V).
\]

\begin{thm} \label{T:simple} The smallest eigenvalue $a_l$ of the operator
\[
\el_S : = \mu\el_T + \lambda\el_R
\]
restricted to the space $L_{2l}$ satisfies the estimates
\[
a_l \ge  x_l \ ,
\]
where $x_l $ is the smaller of the two solutions of the equation
\begin{equation} \label{quadratic}
x^2 - \left(\lambda\Lambda_N + (2-s_{2l})  \mu\right)x + (1-s_{2l})  \mu^2 + 
\lambda\Lambda_N \mu = \lambda \Lambda_N \mu s_{2l} N \Gamma(l,0,...0) \ .
\end{equation}
\end{thm}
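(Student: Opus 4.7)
The plan is to reduce the infimum defining $a_l$ to a two--dimensional variational problem whose characteristic polynomial is exactly \eqref{quadratic}, by exploiting the fact that inside $L_{2l}$ both $\el_T$ and $\el_R$ attain their lowest eigenvalues on explicit low--dimensional subspaces.

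First I would set up the geometry. Let $W_l := {\rm span}\{H_{2l}(v_i):i=1,\dots,N\}$, the lowest eigenspace of $\el_T$ inside $L_{2l}$, with eigenvalue $1-s_{2l}$. Lemma~\ref{L:mon} (second bullet) implies $\el_T\ge 1$ on $W_l^\perp\cap L_{2l}$ for $l\ge 2$, since the remaining Hermite indices $\underline{\alpha}$ with $|\underline{\alpha}|=l$ belong to $\Xi$ (for $l=1$ this complement is empty, and the bound reduces to Proposition~\ref{P:1gap}). For the collision side, $B$ is self--adjoint and preserves $P_{2l}$, so Lemma~\ref{L:hom} translates the elementary homogeneity identity $B[v_i^{2l}] = \Gamma(l,0,\dots,0)|\V|^{2l}$ into the following facts about $L_{2l}$: the radial subspace is one--dimensional and spanned by a unit vector $\hat R_l$, and $B\hat s_l = \sqrt{\eta}\,\hat R_l$, where $\hat s_l \propto \sum_i H_{2l}(v_i)$ is the symmetric unit vector in $W_l$ and $\eta := N\Gamma(l,0,\dots,0)$. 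Hence
\[
\hat R_l = \sqrt{\eta}\,\hat s_l + \sqrt{1-\eta}\,\hat t_l
\]
with $\hat t_l$ a unit vector orthogonal to all of $W_l$.

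For $h\in L_{2l}$ with $\|h\|=1$, I decompose $h = h_W + h_\perp$ along $W_l\oplus W_l^\perp$ and set $\alpha := \|h_W\|^2$. Since $\el_T h_W = (1-s_{2l})h_W$, the cross term in $\langle h,\el_T h\rangle$ vanishes by orthogonality, giving
\[
\langle h,\mu\el_T h\rangle \ge \mu(1-s_{2l})\alpha + \mu(1-\alpha) = \mu - \mu s_{2l}\alpha.
\]
Writing $p := \langle h_W,\hat s_l\rangle$ and $q := \langle h_\perp,\hat t_l\rangle$, orthogonality forces $\langle h,\hat R_l\rangle = \sqrt{\eta}\,p+\sqrt{1-\eta}\,q$; since $|p|\le\sqrt{\alpha}$ and $|q|\le\sqrt{1-\alpha}$,
\[
\|Bh\|^2 = \bigl|\sqrt{\eta}\,p+\sqrt{1-\eta}\,q\bigr|^2 \le \bigl(\sqrt{\eta\alpha}+\sqrt{(1-\eta)(1-\alpha)}\bigr)^2.
\]
Combining with $\langle h,\lambda\el_R h\rangle = \lambda\Lambda_N(1-\|Bh\|^2)$ yields
\[
\langle h,\el_S h\rangle \;\ge\; \mu + \lambda\Lambda_N - M(x,y), \qquad x = \sqrt{\alpha},\ y = \sqrt{1-\alpha},
\]
where $M(x,y) := \mu s_{2l}\,x^2 + \lambda\Lambda_N(\sqrt{\eta}\,x + \sqrt{1-\eta}\,y)^2$.

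The remainder is linear algebra in two dimensions: $M$ is a positive quadratic form on $\R^2$ whose matrix has trace $\mu s_{2l}+\lambda\Lambda_N$ and determinant $\lambda\Lambda_N\mu s_{2l}(1-\eta)$, so (by Perron--Frobenius the optimizer has nonnegative entries and lies in our quadrant) its maximum on the unit circle is the larger root of
\[
z^2 - (\mu s_{2l}+\lambda\Lambda_N)z + \lambda\Lambda_N\mu s_{2l}(1-\eta) = 0.
\]
The affine substitution $z = \mu+\lambda\Lambda_N-x$ converts this equation into exactly~\eqref{quadratic} (matching both the linear coefficient and the constant term after expanding) and swaps larger and smaller roots, producing $a_l\ge x_l$. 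The main technical obstacle is pinning down the radial line in $L_{2l}$ and the angle $\eta$ between $\hat R_l$ and $\hat s_l$; once that is secured via the Hermite--monomial dictionary of Lemma~\ref{L:hom} the rest is a transparent two--dimensional reduction.
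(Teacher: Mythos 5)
Your proof is correct, and it takes a genuinely different route from the paper's. The paper works in the Hermite basis $\{H_{2\ba}\}_{|\ba|=l}$, writes the eigenvalue equation for $\mu\el_T+\lambda\el_R$ componentwise, and after dividing by $(\mu\sigma_{2\ba}+\lambda\Lambda_N-x)$ sums against $\Gamma(\ba)$ to obtain the secular equation \eqref{E:eig}; it then lower-bounds the resolvent sum by replacing every $\sigma_{2\ba}$ with $\ba\in\Xi$ by $1$, which yields the two-term relation \eqref{fraction}, equivalent to \eqref{quadratic}. A separate argument is needed there for the $K=0$ (non-symmetric) branch. You instead argue variationally: split $L_{2l}$ as $W_l\oplus(W_l^\perp\cap L_{2l})$, use that $\el_T\ge(1-s_{2l})$ on $W_l$ and $\el_T\ge 1$ on the complement (Lemma~\ref{L:mon}), and control $\el_R$ through $\|Bh\|^2=|\langle h,\hat R_l\rangle|^2$. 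The key geometric fact is the same overlap $\eta=N\Gamma(l,0,\ldots,0)$ that the paper feeds into \eqref{fraction}, but you use it to bound a rank-one term by Cauchy--Schwarz, reducing the Rayleigh quotient to a $2\times 2$ quadratic form in $(x,y)=(\sqrt{\alpha},\sqrt{1-\alpha})$. Its trace and determinant, after the affine change $z=\mu+\lambda\Lambda_N-x$, reproduce \eqref{quadratic} exactly, and the order-reversal of the substitution turns the largest eigenvalue of $M$ into the smaller root $x_l$. I verified the trace, determinant, and back-substitution, and they match. Your approach buys a cleaner, basis-free estimate that never needs to solve the eigenvalue equation or treat $K=0$ separately; the paper's approach buys the exact secular relation \eqref{E:eig}, which is what later lets them show the bound is saturated at $l=2$ in Theorem~\ref{T:2gap}.

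One spot to tighten: the identity $\eta=\|B\hat s_l\|^2=N\Gamma(l,0,\ldots,0)$ does not follow from Lemma~\ref{L:hom} alone, since that lemma transfers coefficients from the monomial to the Hermite basis but says nothing about the (very different) inner products. You need the small self-adjointness computation: with $\rho:=\sum_{|\ba|=l}\frac{l!}{\alpha_1!\cdots\alpha_N!}H_{2\ba}$ spanning the radial line, $\langle H_{2l}(v_1),\rho\rangle=\|H_{2l}\|^2$ (coefficient $1$, orthogonality of Hermites) and also $\langle H_{2l}(v_1),\rho\rangle=\langle BH_{2l}(v_1),\rho\rangle=\Gamma(l,0,\ldots,0)\|\rho\|^2$, whence $\|H_{2l}\|^2=\Gamma(l,0,\ldots,0)\|\rho\|^2$ and then
\[
\eta=\frac{\bigl|\langle \textstyle\sum_i H_{2l}(v_i),\rho\rangle\bigr|^2}{N\|H_{2l}\|^2\,\|\rho\|^2}
=\frac{N\|H_{2l}\|^2}{\|\rho\|^2}=N\Gamma(l,0,\ldots,0)\ .
\]
You flagged this as the technical crux, and it is; with that line included the proof is complete. (For $l=1$ one gets $\eta=1$ and $W_1^\perp\cap L_2=\{0\}$, consistent with Proposition~\ref{P:1gap}, as you note.)
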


\begin{proof}
The equation for the eigenvalue $x$ of $\mu \el_T + \lambda \el_R$ gives
\[
\mu \sum{T_j}h+ \lambda\Lambda_N Bh = (N \mu + \lambda\Lambda_N - x) h \ .
\]
 Observe that if $|\ba|=l$, $B[v_1^{2\alpha_1}\cdots v_N^{2\alpha_N}]$ is an 
homogeneous radial polynomial of degree $2l$ so that we have
\[
 B[v_1^{2\alpha_1}\cdots v_N^{2\alpha_N}](r)=\Gamma(\ba)r^{2l}=
 \Gamma(\ba) \displaystyle\sum_{|\underline{\beta}|=l} 
\frac{l!}{\beta_1!...\beta_N!} v_1^{2\beta_1}...v_N^{2\beta_N} \ ,
\]
in particular
\begin{equation} \label{E:rad2}
  \sum_{|\ba|=l}\frac{l!}{\alpha_1!...\alpha_N!}\Gamma(\ba) = 1.
\end{equation}

Writing a generic function $f$ in $L_{2l}$ as 
\[
f =\sum_{|\ba|=l}c_{\ba}H_{2\ba}
\]
the eigenvalue equation becomes:
 \begin{equation} 
\mu\sum_{|\ba|=l}\sum_j s_{2\alpha_j}c_{\ba}H_{2\ba}+\lambda\Lambda_N 
\biggl[\sum_{|\ba|=l}c_{\ba}\Gamma(\ba)\biggr] 
\sum_{|\ba|=l}\frac{l!}{\alpha_1!...\alpha_N!}H_{2\ba}  
= (N \mu + \lambda\Lambda_N - x) 
\sum_{|\ba|=l}c_{\ba}H_{2\ba} \ ,
\end{equation}
where we have used that the projection $B$ satisfies the hypothesis of Lemma~\ref{L:hom}. 

Thus for every  $\ba$ 
\begin{equation} \label{E:c}
\left(\mu \sigma_{2\ba} + \lambda \Lambda_N - x\right) 
c_{\ba} = K \lambda \Lambda_N \frac{l!}{\alpha_1!...\alpha_N!} \ ,
\end{equation}
where we set $\sum_{|\ba|=l}c_{\ba}\Gamma(\ba)=K$. Consider first the case $K 
\neq 0$, that is $(x -\lambda \Lambda_N - \mu\sigma_{2\ba}) \neq 0$ for every $\ba$. 
Rearranging, multiplying both sides by $\Gamma(\ba)$, and adding we get
  
\begin{equation} \label{E:eig}
\frac 1{\lambda\Lambda_N} = \sum_{|\ba|=l}\frac{1}{\lambda\Lambda_N + \mu 
\sigma_{2\ba} - x} \Gamma(\ba) 
\frac{l!}{\alpha_1!...\alpha_N!} \ .
\end{equation}

With $x$ moving in from $-\infty$, the first singularity of the right side of eq.~\eqref{E:eig} occurs 
when 
\[
x=\min_{|\ba|=l}\left(\lambda\Lambda_N + \mu 
\sigma_{2\ba}\right)=\lambda\Lambda_N +\mu(1-s_{2l}),
\]
where the last equality follows from Lemma~\ref{L:mon}. The right side of 
eq.~\eqref{E:eig} is a positive increasing function of $x$ until the first 
singularity. Thus, the smallest eigenvalue is less than $\lambda\Lambda_N 
+\mu(1-s_{2l})$. For $0<x < \lambda\Lambda_N +\mu(1-s_{2l})$ we get
\begin{align*}  
\frac{1}{\lambda\Lambda_N} &= \frac{1}{\lambda\Lambda_N + (1-s_{2l})\mu -x} N 
\Gamma(l,0,...0) + \sum_{\substack{|\ba|=l\\\ba \in \Xi}} {\frac{1}{(\lambda\Lambda_N + \mu \sigma_{2\ba} - x)} 
\Gamma(\ba) \frac{l!}{\alpha_1!...\alpha_N!}} \crcr
&\leq\frac{1}{\lambda\Lambda_N+(1 - s_{2l}) \mu-x} N \Gamma(l,0,...0) + 
\frac{1}{\lambda\Lambda_N+\mu-x}\sum_{\substack{|\ba|=l\\\ba \in \Xi}}{\Gamma(\ba) 
\frac{l!}{\alpha_1!...\alpha_N!}}\crcr
&\leq \frac{1}{\lambda\Lambda_N+(1 -s_{2l}) \mu-x} N 
\Gamma(l,0,...0) + \frac{1}{\lambda\Lambda_N+\mu-x}[1-N \Gamma(l,0,...0)] \ . \,\,\,\,\, \text{ (using eq.~\eqref{E:rad2})}
\end{align*}
It is easily seen that the equation  
\begin{equation} \label{fraction}
\frac{1}{\lambda\Lambda_N} = \frac{1}{\lambda\Lambda_N+(1 -s_{2l}) \mu-x} N 
\Gamma(l,0,...0) + \frac{1}{\lambda\Lambda_N+\mu-x}[1-N \Gamma(l,0,...0)]
\end{equation}
and \eqref{quadratic} are equivalent and hence the smallest eigenvalue $a_l \ge x_l$.

Note that necessarily $x_l < \lambda \Lambda_N+\mu(1-s_{2l})$. Thus, if $K=0$, $a_l=\lambda \Lambda_N+\mu 
\sigma_{2\ba}$ for some $\ba$ and hence $ a_l \geq \lambda \Lambda_N+\mu(1-s_{2l}) > x_l$
which proves the theorem.
\end{proof}
    
\begin{proof}[Proof of Theorem~\ref{T:2gap}]
Since symmetric functions are preserved under $\el$, the space of symmetric 
Hermite polynomials in $L_4$ with orthonormal basis 
$\{\sqrt{\frac{2}{N(N-1)}}\sum_{i \neq j}{H_2(v_i)H_2(v_j)}, 
\sqrt{\frac{2}{3N}}\sum{H_4(v_i)} \}$ gives rise to two eigenfunctions. The 
action of $\mu \el_T+\lambda \el_K$ on this space is represented by the 
following matrix
\begin{equation} \label{E:matrix}
\begin{pmatrix}
\mu+\frac{3\lambda}{2(N-1)} & \frac{-\sqrt{3}\lambda}{2\sqrt{N-1}} \\
\frac{-\sqrt{3}\lambda}{2\sqrt{N-1}} & \frac{5\mu}{8}+\frac{\lambda}{2} \\
\end{pmatrix}
\end{equation}
whose characteristic equation is~\eqref{E:2gap} and smallest eigenvalue is thus $a_2$. Hence, we immediately have $\Delta_N^{(2)} \leq a_2$. 

To see the opposite inequality recall that $x_l$ is the smaller of the two 
solutions of the equation \eqref{fraction}. Since for $l \ge 2$,  $s_{2l} \le s_4 
= \frac{3}{8}$ and $\Gamma(l,0,...0)  \leq   \Gamma(2,0,...0) =  
\frac{3}{N(N+2)} $ we get from \eqref{fraction} 
\[
\Delta_N^{(2)} \ge  a_2 \ .
\]
\end{proof} 

The eigenfunction corresponding to the ``second'' gap $\Delta_N^{(2)}$ is given 
by $\sum_{|\ba|=2}c_{\ba}H_{2\ba} \in L_4$, where $c_{\ba}$ are symmetric under 
exchange of indices (see eq.~\eqref{E:c}). In fact, eq.~\eqref{E:c} 
characterizes the symmetric eigenfunctions of $\mu\el_T+\lambda\el_R$ when $K 
\neq 0$ and the non-symmetric eigenfunctions when $K=0$. This means that 
solutions $x$  of eq.~\eqref{E:eig} correspond to symmetric eigenfunctions 
alone. Hence, the unique eigenfunction (by extension, also that of 
$\mu\el_T+\lambda\el_K$) corresponding to $a_2$ is symmetric, which is the 
physically interesting case.

We eventually do get the optimal bound $a_2$ due to the following reason: The 
space of symmetric functions in $L_4$ is spanned by the set $\{\sum_{i \neq 
j}{H_2(v_i)H_2(v_j)},\sum{H_4(v_i)}\}$, which can also be spanned by two 
functions, one of which is radial (of degree 4) and the other  perpendicular to 
the radial one. The latter gives the gap $\Lambda_N$ for $\el_K$. Hence, the 
action of $\el_K$ and $\el_R$ on the space of symmetric functions in $L_4$ is 
precisely the same.

In the limit $N \to \infty$, the off-diagonal elements of the matrix~\eqref{E:matrix} vanish. Therefore, in this limit, the eigenvalues 
$x_2^{\pm}$ tend to $\frac \lambda 2 + \frac 5 8 \mu$ and $\mu$, which 
corresponds to the simultaneous diagonalization of operators $\el_T$ and 
$\el_K$.

\subsection{Approach to Equilibrium in Entropy}\label{sec:2.2}
 
It is well known that 
\[
S(f|\gamma)=0 \Leftrightarrow f=\gamma,
\] 
where $S(f|\gamma)$ is defined as in eq.~\eqref{E:relent}. In this section we 
will prove that $S(f_t|\gamma)$ decays to 0 exponentially as $t\to\infty$, if 
$f_t$ is the solution of the Master equation~\eqref{E:mast}. Indeed Theorem~\ref{T:conen} immediately follows from the following proposition.
  
\begin{prop} \label{P:entr}
Let $f_0$ be a probability density on $\R^N$ with finite relative 
entropy and $f_t$ the solution of the Master equation~\eqref{E:mast} with 
initial condition $f_0$. We have
  
\begin{equation}\label{E:produ}
\frac{dS(f_t|\gamma)}{dt} \leq - \rho S(f_t|\gamma) \ ,
\end{equation}
where 
\[
\rho = \frac{\mu}{2} \ .
\]
\end{prop}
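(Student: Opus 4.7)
The plan is as follows. First, differentiate:
$$\frac{d}{dt}S(f_t|\gamma)=-\lambda N\!\int(I-Q)f_t\log\frac{f_t}{\gamma}\,d\mathbf{v}-\mu\sum_{j=1}^{N}\!\int(I-R_j)f_t\log\frac{f_t}{\gamma}\,d\mathbf{v}.$$
Since $Q$ preserves $|\mathbf{v}|^2$ and $\log\gamma$ is affine in $|\mathbf{v}|^2$, the identity $\int(I-Q)f\log\gamma\,d\mathbf{v}=0$ holds, so the Kac contribution reduces to $\int(I-Q)f\log f\,d\mathbf{v}\geq 0$ by the standard Kac $H$-theorem (write it as $\int f(\log f-Q\log f)d\mathbf{v}\geq\int f\log(f/Qf)d\mathbf{v}\geq 0$ and use Jensen) and may be discarded. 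It thus suffices to prove
$$\sum_{j=1}^N\int(I-R_j)f\log(f/\gamma)\,d\mathbf{v}\;\geq\;\tfrac{1}{2}\,S(f|\gamma).$$

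Second, reduce to a one-particle inequality. Using $R_jf=\gamma\,T_j[f/\gamma]$ and $U:=f/\gamma$, each thermostat term equals $\int\gamma(U-T_jU)\log U\,d\mathbf{v}$. For fixed $\mathbf{v}_{\hat j}$, $T_j$ acts in $v_j$ alone as the one-particle operator $Tu(v):=\int g(w)dw\,\dashint d\theta\,u(v\cos\theta+w\sin\theta)$. Setting $M(\mathbf{v}_{\hat j}):=\int g(v_j)U\,dv_j$ and $\tilde U:=U/M$, the bistochasticity of $T$ on $g\,dv$ eliminates the $\log M$ cross term and leaves a fibered integral which, assuming the \emph{one-particle modified log-Sobolev inequality}
$$\int g(v)\,(u-Tu)\log u\,dv\;\geq\;\tfrac{1}{2}\,S(u|g),\qquad\textstyle\int gu\,dv=1,\qquad(\ast)$$
is bounded below by $\tfrac{1}{2}[S(f|\gamma)-S(f_{\hat j}|\gamma_{\hat j})]$, since $\int\gamma_{\hat j}M\log M\,d\mathbf{v}_{\hat j}=S(f_{\hat j}|\gamma_{\hat j})$ and $\int\gamma U\log U\,d\mathbf{v}=S(f|\gamma)$.

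The crux is proving $(\ast)$; I expect this is the content of Proposition~\ref{P:N1} and it also yields the Boltzmann-level entropy decay mentioned after Theorem~\ref{T:conen}. My approach is to identify the fixed-angle operator $T_\theta u(v):=\int g(w)u(v\cos\theta+w\sin\theta)\,dw$ with the Mehler/Ornstein--Uhlenbeck semigroup $P_\tau$ at time $\tau=-\log|\cos\theta|$ (replacing $u$ by $u(-\,\cdot\,)$ when $\cos\theta<0$), so that $Tu$ is an average over $\theta$ of $P_\tau$'s applied to the even part of $u$. Gross' Gaussian log-Sobolev inequality gives $S(P_\tau u|g)\leq e^{-2\tau}S(u|g)$, and combining this with convexity of $v\mapsto S(v|g)$, the elementary bound $S(\tfrac{u+u(-\,\cdot\,)}{2}|g)\leq S(u|g)$, and $\dashint_0^{2\pi}\cos^2\theta\,d\theta=\tfrac{1}{2}$ yields the one-step contraction $S(Tu|g)\leq\tfrac{1}{2}S(u|g)$. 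Inequality $(\ast)$ then follows from the algebraic identity
$$\int g(u-Tu)\log u\,dv=\bigl[S(u|g)-S(Tu|g)\bigr]+D_{\mathrm{KL}}(Tu\,\|\,u),$$
and the non-negativity of $D_{\mathrm{KL}}$.

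Finally, summing $\int\gamma(U-T_jU)\log U\,d\mathbf{v}\geq\tfrac{1}{2}[S(f|\gamma)-S(f_{\hat j}|\gamma_{\hat j})]$ over $j$ and invoking the Han-type inequality $\sum_j S(f_{\hat j}|\gamma_{\hat j})\leq(N-1)S(f|\gamma)$ (valid for relative entropy with respect to any product reference measure, and equivalent to Han's classical inequality $H(f)\leq\tfrac{1}{N-1}\sum_j H(f_{\hat j})$) produces $\sum_j\int\gamma(U-T_jU)\log U\,d\mathbf{v}\geq\tfrac{1}{2}S(f|\gamma)$, and Gronwall closes the argument with $\rho=\mu/2$. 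The principal obstacle is $(\ast)$: the naive route via Jensen, $\int g(u-Tu)\log u=\int gu(\log u-T\log u)\geq\int gu\log(u/Tu)$, is too lossy and underestimates the production, so the OU identification combined with Gross' LSI is what recovers the sharp rate $1/2$ matching the $L^2$ gap.
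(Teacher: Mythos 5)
Your argument is correct and follows the paper's plan in all essentials: discard the Kac contribution using the classical $H$-theorem, reduce to a one-particle modified log-Sobolev estimate by conditioning on $\hat{\V}_j$ and using superadditivity of relative entropy (the Han/Loomis--Whitney bound, which is exactly the paper's Lemma~\ref{L:jen}), and prove the one-particle estimate via the Ornstein--Uhlenbeck/Mehler identification of the fixed-angle operator together with Gross' log-Sobolev entropy decay and a parity reduction -- this is precisely the paper's Proposition~\ref{P:N1}. The one place you genuinely deviate is the final step of that proposition, where the one-step entropy contraction $\int g\,Tu\log Tu\,dv \le \tfrac12 \int g\,u\log u\,dv$ (the paper's eq.~\eqref{E:tent1eq}) must be converted into the production estimate $\int g(u-Tu)\log u\,dv \ge \tfrac12\int g\,u\log u\,dv$. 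The paper does this by controlling the relative entropy along the auxiliary semigroup $e^{(T-I)t}$, applying convexity of $x\log x$ term-by-term to the exponential series, and then differentiating at $t=0$. You instead invoke the exact decomposition
\[
\int g(u-Tu)\log u\,dv=\bigl[S(u|g)-S(Tu|g)\bigr]+D_{\mathrm{KL}}\bigl(Tu\,\|\,u\bigr),
\]
discard the non-negative divergence, and apply the contraction once. Your identity is cleaner and more direct: it avoids the infinite-series detour, makes transparent exactly what is being thrown away, and isolates the single inequality $S(Tu|g)\le\tfrac12 S(u|g)$ as the nontrivial content; the paper's semigroup argument needs the contraction applied to every power $T^k$ and a Taylor linearization to conclude. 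Both routes recover the sharp rate $\rho=\mu/2$, and both hinge on the same OU representation and Gross inequality, so the difference is one of packaging rather than substance.
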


The left-hand side of the inequality~\eqref{E:produ} is the entropy 
production and can be computed as
\[
\frac{dS(f_t|\gamma)}{dt} = \int{\frac{\partial f_t}{\partial 
t}\log{\frac{f_t}{\gamma}}} + \int{\frac{\partial f_t}{\partial t}} = 
-\int(\lambda \eg_K + \mu \eg_T)[f_t] \log\frac{f_t}{\gamma} 
\] 
because $\int f_t=1$. Thus, Proposition~\ref{P:entr} will follow 
if we prove that for any density $f$ with finite relative entropy, we have
\[
 -\int(\lambda \eg_K + \mu \eg_T)[f] \log\frac{f}{\gamma}\leq -\rho 
\int f \log\frac{f}{\gamma} \ .
\]
Since the relative entropy decreases along the Kac flow, i.e. $\int \eg_K [f] 
\log\frac{f}{\gamma} \geq 0$ (see \cite{kac}), it is enough to show that

\begin{equation} \label{E:tent}
-\int \mu \eg_T[f] \log\frac{f}{\gamma}\leq -\rho 
\int f \log\frac{f}{\gamma} \ .
\end{equation}
We first prove the above for the case $N=1$. In this case, $\eg_T=(I-R)$.

\begin{prop} \label{P:N1}
Let $f$ be a probability density on $\mathbb{R}$. Then
\[
\int R[f](v) \log \frac{f(v)}{g(v)} dv = \int dv dw \dashint d\theta f(v^*)g(w^*) \log\frac{f(v)}{g(v)} \leq \frac 1 2 
\int dv f(v) \log\frac{f(v)}{g(v)} \ ,
\]
where $v^*=v\cos{\theta}+w\sin{\theta}$, $w^{*}=-v\sin{\theta}+w\cos{\theta}$ and $g(v)$ is the Gaussian $\frac{1}{\sqrt{2\pi}} e^{-\frac{v^2}{2}}$.
\end{prop}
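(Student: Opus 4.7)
The plan is to decompose the rotational average defining $R$ into its single-angle pieces, $R = \dashint_0^{2\pi} R_\theta\, d\theta$ with $R_\theta[f](v) := \int dw\, g(w^*) f(v^*)$, prove a $\theta$-dependent entropic bound for each $R_\theta$, and then integrate over $\theta$. The rate $1/2$ in the conclusion will appear as $\dashint_0^{2\pi} \cos^2\theta\, d\theta = s_2 = 1/2$, matching the spectral gap computation of Proposition~\ref{P:1gap}.

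The identification step uses the rotation invariance $g(v^*) g(w^*) = g(v) g(w)$ to rewrite $R_\theta[f] = g \cdot T_\theta[h]$ where $h := f/g$ and $T_\theta[h](v) := \int g(w)\, h(v\cos\theta + w\sin\theta)\, dw$. Equivalently, if $V \sim f$ and $W \sim g$ are independent, then $R_\theta[f]$ is the density of $V\cos\theta + W\sin\theta$, i.e., the Mehler/Ornstein--Uhlenbeck push-forward of $f$ at time $t$ with $e^{-t} = |\cos\theta|$ (up to a harmless sign flip $V \to -V$ when $\cos\theta < 0$, under which the relative entropy $S(\cdot\,|\,g)$ is invariant since $g$ is even). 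In particular $R_\theta[f]$ is itself a probability density with integral one.

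For each $\theta$ I then perform the elementary entropy split
\[
\int R_\theta[f] \log\frac{f}{g}\, dv = \int R_\theta[f] \log\frac{R_\theta[f]}{g}\, dv + \int R_\theta[f] \log\frac{f}{R_\theta[f]}\, dv \le S(R_\theta[f]\,|\,g),
\]
where the first term on the right equals $S(R_\theta[f]\,|\,g)$ by definition and the second is $\le 0$ by the elementary inequality $\log x \le x-1$ combined with $\int f\, dv = \int R_\theta[f]\, dv = 1$. The Gaussian logarithmic Sobolev inequality of Gross, in its dissipative form, states that $S(P_t^* f\,|\,g) \le e^{-2t} S(f\,|\,g)$ along the OU semigroup. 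Applied here with $e^{-2t} = \cos^2\theta$, this yields $S(R_\theta[f]\,|\,g) \le \cos^2\theta \cdot S(f\,|\,g)$. Averaging over $\theta$ completes the proof.

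The principal obstacle is the appeal to the Gaussian log-Sobolev inequality; this is a classical but nontrivial result that I would cite rather than reprove. Everything else---the rotational decomposition, the identification of each $R_\theta$ with an Ornstein--Uhlenbeck step via Mehler, and the two-piece entropy split---is elementary. It is pleasing that the entropic rate thus emerges from the same spectral datum $s_2 = 1/2$ as the $L^2$ spectral gap of $\el_T$.
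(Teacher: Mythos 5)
Your proof is correct, and it takes a genuinely different route from the paper's. Where you decompose $R$ into single-angle pieces $R_\theta$ and identify each one as an Ornstein--Uhlenbeck push-forward of $f$ at OU-time $t$ with $e^{-t}=|\cos\theta|$, the paper instead works with the angle-averaged operator $T[G]$ and first proves the self-entropy inequality $\int g\,T[G]\log T[G]\le\tfrac12\int g\,G\log G$ (their eq.~\eqref{E:tent1eq}), obtaining the factor $\tfrac12$ by the change of variables $\cos\theta=e^{-s}$ on the quarter-circle $\theta\in[0,\pi/2]$, then Jensen for $x\log x$ to pull the logarithm of the angular average inside, then the OU entropy decay on each $P_s$. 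That quarter-circle restriction in turn forces an even/odd decomposition of $G$ to show $T[G]=\overline T[G_e]$, and, at the end, an extra semigroup-and-Taylor argument is needed to pass from the self-entropy inequality back to the cross-entropy statement $\int g\,T[G]\log G\le\tfrac12\int g\,G\log G$. Your version sidesteps all three auxiliary devices --- the Jensen step, the parity split, and the semigroup/Taylor trick --- by doing the cross-to-self reduction at the per-$\theta$ level with the single identity $\int R_\theta[f]\log\frac{f}{g}\,dv = S(R_\theta[f]\,|\,g) - S(R_\theta[f]\,|\,f)\le S(R_\theta[f]\,|\,g)$, after which only linear averaging over $\theta$ remains. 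Both arguments rest on the same nontrivial input (entropic decay along the OU semigroup, the paper's Theorem~\ref{T:ou}), and both produce the rate from $\dashint\cos^2\theta\,d\theta = s_2 = \tfrac12$, matching Proposition~\ref{P:1gap}; your per-$\theta$ bound $S(R_\theta[f]\,|\,g)\le\cos^2\theta\,S(f\,|\,g)$ is in fact slightly sharper information than the paper's averaged~\eqref{E:tent1eq}, which it recovers via convexity of relative entropy. On the whole your route is shorter and more transparent.
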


Calling $f(v)=g(v)G(v)$, we need to prove that
\begin{equation} \label{E:tent1}
\int dv\, g(v) T[G](v) \log G(v) \leq \frac{1}{2} \int dv\, g(v) G(v) \log 
G(v) \ ,
\end{equation}
where 
\[
T[G](v):= \int dw g(w)\dashint_0^{2\pi} d\theta G(v\cos\theta+w\sin\theta) \ .
\]

The idea will be to show the above inequality by proving that

\begin{equation} \label{E:tent1eq}
\int dv\, g(v) T[G](v) \log T[G](v) \leq \frac{1}{2} \int dv\, g(v) G(v) \log 
G(v) \ .
\end{equation}
We will be invoking the following well-known property of the Ornstein-Uhlenbeck process, see \cite{Stam,Gross1,BE,Gross2,Toscani}.

\begin{thm} \label{T:ou}
Let $P_s$ be the semigroup generated by the 1-dimensional Ornstein-Uhlenbeck 
process, that is, $U_s=P_s[U_0]$ is the solution of the Fokker-Planck equation
\[
\frac{\partial U_s(v)}{\partial s} = U_s''(v) - vU_s'(v)
\]
with initial condition $U_0$. For every density $G$ we have
\[
\int g(v)dv \, P_s[G](v) \, \log(P_s [G](v)) \leq e^{-2s} \int g(v)dv \, G(v) \, \log G(v) \ .
\]
\end{thm}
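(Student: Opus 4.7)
The plan is to establish this entropy decay via the classical Bakry--Emery / Gross strategy: differentiate the relative entropy along the flow, identify its derivative with (minus) the Fisher information relative to the Gaussian $g$, and close the loop using the Gaussian logarithmic Sobolev inequality.

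Set $G_s := P_s[G]$ and
\[
H(s) := \int g(v)\,G_s(v)\,\log G_s(v)\,dv.
\]
Differentiating under the integral and using $\partial_s G_s = G_s''-vG_s'$ gives
\[
H'(s) = \int g\,(G_s''-vG_s')(\log G_s+1)\,dv.
\]
The ``$+1$'' term vanishes because $\int g(G_s''-vG_s')\,dv = 0$ (integrate by parts and use $g'=-vg$, which is exactly the statement that $g$ is the invariant density of the OU generator). Next, using the key identity $(g G_s')' = g(G_s''-vG_s')$, another integration by parts yields
\[
H'(s) = \int (gG_s')'\,\log G_s\,dv = -\int g\,\frac{(G_s')^{2}}{G_s}\,dv =: -I(s),
\]
so that $H$ is already manifestly decreasing; $I(s)$ is (a multiple of) the Fisher information of $G_s$ with respect to $g$.

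The second ingredient is the Gaussian logarithmic Sobolev inequality of Gross (see \cite{Gross1,Gross2}, and also \cite{Stam,BE,Toscani} for alternative derivations): for any density $F$ with respect to $g$,
\[
\int g\,F\log F\,dv \;\le\; \frac{1}{2}\int g\,\frac{(F')^{2}}{F}\,dv.
\]
Applying this with $F=G_s$ gives $H(s)\le \tfrac{1}{2}I(s)$, hence $H'(s)\le -2H(s)$. Grönwall's inequality then yields $H(s)\le e^{-2s}H(0)$, which is exactly the claim.

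The one piece that is not trivial is the Gaussian log-Sobolev inequality itself, but this is the standard input we are allowed to invoke. The remaining points are purely technical: justifying differentiation under the integral sign and the two integrations by parts. These are handled by a standard truncation / regularization argument--first assume $G$ is smooth, bounded away from $0$ and $\infty$, and of rapid decay so that every boundary term at $|v|=\infty$ vanishes and $I(s)<\infty$; then extend to a general density $G$ with finite relative entropy by approximating by $G\wedge n + 1/n$ (renormalized), passing to the limit using lower semicontinuity of the relative entropy and monotone/dominated convergence on the right-hand side.
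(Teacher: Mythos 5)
Your argument is correct, and it is worth noting that the paper itself does not prove Theorem 2.12 at all: it is quoted as a known property of the Ornstein--Uhlenbeck semigroup with references to Stam, Gross, Bakry--\'Emery and Toscani. What you have written out is precisely the standard derivation behind those citations. The two key computations check out: since $g'=-vg$, one has $(gG_s')'=g(G_s''-vG_s')$, so $g$ is invariant, the ``$+1$'' term drops, and $H'(s)=-\int g\,(G_s')^2/G_s\,dv=-I(s)$; and the Gaussian logarithmic Sobolev inequality in the form $\int gF\log F\le\tfrac12\int g\,(F')^2/F$ (which is Gross's inequality with $F=f^2$, valid because $\int gG_s\,dv=\int gG\,dv=1$ is preserved by the semigroup) gives $H'\le-2H$ and hence the stated rate $e^{-2s}$, with the sharp constant. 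Your closing remark on regularization is the right technical caveat. The only alternative route worth mentioning is the one implicitly suggested by the paper's own remark following the theorem: the Mehler representation $P_s[G](v)=\int g(w)G(e^{-s}v+\sqrt{1-e^{-2s}}\,w)\,dw$ makes Nelson's hypercontractivity available, and differentiating the hypercontractive bound $\|P_sG\|_{L^q(g)}\le\|G\|_{L^p(g)}$ with $q-1=e^{2s}(p-1)$ at $p=1$ yields the same entropy inequality directly, without Gr\"onwall; but that route and yours are well known to be equivalent, and yours is the cleaner one to present.
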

 
\begin{xrem}
The semigroup, which can be represented explicitly as 
\begin{equation} \label{E:OUsemigp}
P_s[G](v)=\int{dw g(w) G(e^{-s}v+\sqrt{1-e^{-2s}}w)} \ ,
\end{equation}
is self-adjoint in $L^2(\R,g(v) dv)$.
\end{xrem}

We are now ready to prove Proposition~\ref{P:N1}.

\begin{proof}[Proof of Proposition~\ref{P:N1}]

To connect the Ornstein-Uhlenbeck process $P_s$ with the operator $T$ we set
\[
\overline{T} 
[G](v):=\int{dw g(w)\dashint_0^{\pi/2}{d\theta 
G(v\cos{\theta}+w\sin{\theta})}} = \frac{2}{\pi} \int_0^{\infty} ds 
\frac{e^{-s}}{\sqrt{1-e^{-2s}}} P_s[G](v) \ ,
\]
where we use eq.~\eqref{E:OUsemigp} and the change of variables $cos(\theta)=e^{-s}$. It follows that
\begin{align*}
\int{dv \, g(v) \overline{T}[G] \log{\overline{T}[G]}} &= \int dv \, g(v) \left(\frac{2}{\pi} \int_0^{\infty} ds 
\frac{e^{-s}}{\sqrt{1-e^{-2s}}} P_{s}[G]\right) \log{\left(\frac{2}{\pi} \int_0^{\infty} ds' 
\frac{e^{-s'}}{\sqrt{1-e^{-2s'}}} P_{s'}[G]\right)} \crcr
&\leq \int dv \, g(v) \left(\frac{2}{\pi} \int_0^{\infty} ds 
\frac{e^{-s}}{\sqrt{1-e^{-2s}}} P_{s}[G] \log{P_s[G]} \right)  \;\;\;\;\;\;\;\; \text{(using convexity of $x\log{x}$)}\crcr 
&\leq \frac{2}{\pi} \int_0^{\infty} ds \frac{e^{-s}}{\sqrt{1-e^{-2s}}} e^{-2s} \int dv \, g(v) G \log{G} \;\;\;\;\;\;\;\; \text{(using Theorem~\ref{T:ou})} \crcr
&= \frac{1}{2} \int dv \, g(v) G\log{G} \ .
\end{align*}

The next step is to prove the corresponding result for the operator $T$. Let $G=G_e+G_o$ where $G_e$ is even, i.e. $G_e(v)=G_e(-v)$, and $G_o$ is 
odd, i.e. $G_o(-v)=-G_o(v)$.  Observe that $T[G]$ is even, $T[G_o]=0$ and $\overline{T}[G_e]=T[G_e]$. While the first two identities follow directly from  the definitions, the last one also uses the fact that $\displaystyle\int{dwg(w)\int_{\frac{\pi}{2}}^{\pi}{d\theta G_e(v\cos{\theta}+w\sin{\theta})}} = \displaystyle\int{dwg(w)\int_{0}^{\frac{\pi}{2}}{d\theta G_e(-v\cos{\theta}-w\sin{\theta})}}$ under the change of variables $\theta \rightarrow \pi-\theta$ and $w \rightarrow -w$. Thus,
 
\begin{align*}
\int dv \, g(v) T[G](v)\log T[G](v) =& \int dv \, g(v) T[G_e](v) \log T[G_e](v) \crcr
=& \int dv \, g(v) \overline{T}[G_e](v) \log \overline{T}[G_e](v)  \crcr
\leq & \frac{1}{2} \int dv \, g(v) G_e(v) \log G_e(v)  \crcr
\leq & \frac{1}{2} \int dv \, g(v) G(v) \log G(v) \ ,
\end{align*}
where, in the last inequality, we have used that $G_e(v)=(G(v)+G(-v))/2$ and Jensen's inequality. Now that we have established~\eqref{E:tent1eq}, we proceed to derive~\eqref{E:tent1} from it as follows:

\begin{align*}
\int dv \, g(v) e^{(T-I)t}G \log (e^{(T-I)t}G) &\leq e^{-t} 
\displaystyle\sum_{k=0}^{\infty} \frac{t^k}{k!} \int dv \, g(v) T^k[G](v) \log T^k[G](v) 
\;\;\;\;\;\;\;\; \text{(by convexity)} \crcr
&\leq e^{-t} \displaystyle\sum_{k=0}^{\infty} \left(\frac{t}{2}\right)^k 
\frac{1}{k!} \int dv \, g(v) G(v) \log G(v) \;\;\;\;\;\; \text{(by the previous 
computation)} \crcr
&= e^{-\frac{t}{2}} \int dv \, g(v) G(v) \log G(v) \ .
\end{align*}
From the first order terms in the Taylor expansion about $t=0$, we get \eqref{E:tent1}.
\end{proof}

The following lemma will help extend the result to $N>1$.
  
\begin{lem} \label{L:jen}
Let $f(\V)$ be a probability density on $\R^N$ and let its marginal 
over the $j^{\text{th}}$ variable be denoted by $f_{j} (\hat{\V}_j)= 
\int{f(\V)dv_j}$, where $\hat{\V}_j=(v_1,\ldots,v_{j-1},v_{j+1},\ldots, v_N)$. 
Then we have 
\[ 
\sum_{j=1}^N \int f_{j} \log f_{j} d\hat{\V}_j \leq (N-1) \int f \log f 
d\V \ .
\]
\end{lem}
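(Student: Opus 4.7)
I recognize the inequality as Han's inequality for the differential entropy functional $f \mapsto \int f \log f \, d\V$. Setting
\[
C_j := \int f(\V) \log \frac{f(\V)}{f_j(\hat{\V}_j)} \, d\V,
\]
the identity $\int f \log f_j \, d\V = \int f_j \log f_j \, d\hat{\V}_j$ (obtained by integrating $f$ over $v_j$) yields $\int f_j \log f_j \, d\hat{\V}_j = \int f \log f \, d\V - C_j$. Summing over $j$, the lemma is equivalent to showing
\[
\sum_{j=1}^N C_j \;\ge\; \int f \log f \, d\V.
\]

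My plan is to match the right-hand side via the ``forward'' chain rule. Let $f^{(1:j)}(v_1,\ldots,v_j) := \int f \, dv_{j+1} \cdots dv_N$, with the convention $f^{(1:0)} \equiv 1$. Iterating the same splitting applied to the marginals $f^{(1:j)}$ produces the telescoping formula
\[
\int f \log f \, d\V \;=\; \sum_{j=1}^N D_j, \qquad D_j := \int f^{(1:j)} \log \frac{f^{(1:j)}}{f^{(1:j-1)}} \, dv_1 \cdots dv_j.
\]
It therefore suffices to prove the pointwise estimate $C_j \ge D_j$ for each $j$.

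The key step is to exhibit $C_j - D_j$ as a Kullback--Leibler divergence. Rewriting $D_j = \int f \log\bigl(f^{(1:j)}/f^{(1:j-1)}\bigr) \, d\V$, a direct computation gives
\[
C_j - D_j \;=\; \int f(\V) \log \frac{f(\V)}{q_j(\V)} \, d\V, \qquad q_j(\V) := \frac{f_j(\hat{\V}_j) \, f^{(1:j)}(v_1,\ldots,v_j)}{f^{(1:j-1)}(v_1,\ldots,v_{j-1})},
\]
and a Fubini computation (integrate $q_j$ first over $v_j$ to recover $f_j$, then over $\hat{\V}_j$) shows that $q_j$ is a probability density on $\R^N$. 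Hence $C_j - D_j \ge 0$ by nonnegativity of the relative entropy, and summing over $j$ yields the lemma.

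The main obstacle is guessing the density $q_j$ that turns $C_j - D_j$ into a recognizable Kullback--Leibler divergence---this is the ``conditioning reduces differential entropy'' inequality in disguise. Once $q_j$ is identified, the remainder is an exercise in Fubini and Jensen's inequality, with only mild integrability hypotheses needed to justify the rearrangements (which are automatic under the finite relative entropy assumption made wherever this lemma is applied).
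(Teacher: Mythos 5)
Your proof is correct, and it takes a genuinely different route from the paper. The paper proves the lemma by invoking the Loomis--Whitney inequality to show $Z := \int \prod_j f_j^{1/(N-1)}\,d\V \le 1$ and then applies Jensen's inequality to $\int f \log\bigl(f/\prod_j f_j^{1/(N-1)}\bigr)\,d\V \ge -\log Z \ge 0$, from which the lemma follows by rearranging. Your argument instead proceeds purely information-theoretically: it rewrites the claim as $\sum_j C_j \ge \int f\log f$, expands $\int f\log f$ via the telescoping chain-rule decomposition $\sum_j D_j$, and proves the stronger pointwise-in-$j$ estimate $C_j \ge D_j$ by identifying $C_j - D_j$ as a Kullback--Leibler divergence $D(f\Vert q_j) \ge 0$. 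This is exactly the ``conditioning reduces entropy'' proof of Han's inequality: $D_j$ is (up to sign) the conditional entropy of $v_j$ given $v_1,\dots,v_{j-1}$ while $C_j$ conditions on all the other variables, and conditioning on more can only help. The paper's route is shorter given Loomis--Whitney as a black box, and isolates a clean auxiliary bound ($Z\le 1$); yours is more elementary in the sense that the only input is nonnegativity of relative entropy (Jensen), and it gives the finer term-by-term inequality $C_j \ge D_j$ rather than only the sum. The two are closely related since Loomis--Whitney itself can be derived from Han's inequality; you have essentially unrolled the paper's black box into first principles.

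One small point worth flagging if you were to write this up in full: when you assert $D(f\Vert q_j)\ge 0$ you should note that $q_j$ is well-defined $f$-a.e.\ (the denominator $f^{(1:j-1)}(v_1,\dots,v_{j-1})$ is strictly positive wherever $f(\V)>0$), and that the integral may be $+\infty$ but never negative; this is standard and poses no difficulty, but the paper's application supplies the finite relative entropy hypothesis that keeps everything finite.
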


\begin{proof}
We first observe that from the Loomis-Whitney inequality \cite{LW}, that is $\displaystyle\int_{\R^N}{F_1(\hat\V_1)...F_N(\hat\V_N)} \leq ||F_1||_{L^{N-1}}...||F_N||_{L^{N-1}}$ for $F_j \in L^{N-1}(\R^{N-1})$, it follows 
that
\begin{equation} \label{E:young}
Z:=\int{\prod_{j=1}^{N} f_j^{\frac{1}{N-1}} d\V} \leq 1 \ .
\end{equation}
  
Thus we have
\begin{align*}
\int f \log\left[\frac{f}{\prod f_j^{\frac{1}{N-1}}}\right]d\V 
=& Z \int \frac{f}{\prod f_j^{\frac{1}{N-1}}} 
\log \left[\frac{f}{\prod f_j^{\frac{1}{N-1}}}\right] 
\frac{\prod f_j^{\frac{1}{N-1}}}{Z} d\V 
\geq Z\left[\int \frac{f}{Z}d\V \right] \log\left[\int 
\frac{f}{Z}d\V\right] 
= -\log{Z} \ ,
\end{align*}
where we have used Jensen's inequality and the convexity of $x\log(x)$. The Lemma follows easily from the 
above inequality and~\eqref{E:young}.
\end{proof}
   
\begin{proof} [Proof of Proposition~\ref{P:entr}]

We first observe that
\begin{align*}
-\int \eg_T[f] \log\frac{f}{\gamma} =& \sum_j \int d\V \int dw 
\dashint d\theta f(\V_j(\theta,w))g(w_j^*(\theta)) \log\frac{f(\V)}{\gamma(\V)} - 
NS(f|\gamma)\crcr
=& \sum_j \int d\hat{\V}_j f_j(\hat{\V}_j) \int dv_j dw 
\dashint d\theta 
\frac{f(\V_j(\theta,w))}{f_j(\hat{\V}_j)}g(w_j^*(\theta))\log\left(\frac{f(\V)}{
f_j (\hat{\V}_j)
g(v_j)}\right)   \crcr
&\quad+\sum_j \int d\V dw \dashint d\theta f(\V_j(\theta,w))g(w_j^*(\theta))
\log\frac{f_j(\hat{\V}_j)}{\gamma_j(\hat{\V}_j)} - NS(f|\gamma) \ ,
\end{align*}
where $f_j(\hat{\V}_j):=\int{dv_j f(\V)}$ and $\gamma_j(\hat{\V}_j):=\int{dv_j 
\gamma(\V)}$, as in Lemma~\ref{L:jen}. In the first term of the last line, we can undo the rotation by $\theta$ by noting that $f_j$ and $\gamma_j$ are independent of $v_j$. Applying Proposition~\ref{P:N1} 
to $\frac{f(\V)}{f_j(\hat\V_j)}$ (in the second line above) as a function of $v_j$ alone we get:

\begin{align*}
-\int \eg_T[f] \log\frac{f}{\gamma} & \leq \frac{1}{2} \sum_j \int d\V f(\V) \log \left(\frac{f(\V)}{f_j(\hat\V_j) g(v_j)}\right) + \sum_j \int d\V f(\V) \log \frac{f_j(\hat\V_j)}{\gamma_j(\hat\V_j)} -NS(f|\gamma) \crcr
&= \frac{1}{2} \sum_j \int d\V f(\V) \log \left(\frac{f(\V)}{f_j(\hat\V_j) g(v_j)}\right) + \frac{1}{2}\sum_j \int d\V f(\V) \log \frac{f_j(\hat\V_j)}{\gamma_j(\hat\V_j)} \crcr
& \quad \quad + \frac{1}{2}\sum_j \int d\V f(\V) \log \frac{f_j(\hat\V_j)}{\gamma_j(\hat\V_j)} -NS(f|\gamma) \crcr
  &= -\frac{1}{2}NS(f|\gamma) + \frac{1}{2} \sum_j \int d\V f(\V)\log{f_j}(\hat\V_j) -\frac{1}{2} \sum_j \int d\V f(\V)\log{\gamma_j(\hat\V_j)} \ . \crcr
\end{align*}

Using Lemma~\ref{L:jen} for the second term, and that 
$\gamma_j(\hat\V_j)=\prod_{i\not=j}g(v_i)$ so that $\sum_j \int d\V f(\V) 
\log{\gamma_j(\hat\V_j)} = (N-1) \int{f \log \gamma}$ we get:  
\[
-\int \eg_T[f] \log\frac{f}{\gamma} \leq -\frac{1}{2}S(f|\gamma) 
\]
and this proves~\eqref{E:tent}.
\end{proof}

\begin{xrems}  \hspace*{\fill}
\begin{itemize}
\item[$\bullet$] Proposition~\ref{P:entr} yields a lower bound on the spectral gap $\Delta_N$ as follows: given a function $f$ of the form
\[
f = \gamma (1+ \epsilon h) 
\]
with $\int{h \gamma}=0$ and $\epsilon$ small, one can write
 \[ 
\epsilon \int \gamma \frac{\partial h}{\partial t}\left(\epsilon h - 
\frac{\epsilon^2h^2}{2}...\right) \leq 
- \rho \int\gamma(1+\epsilon h)\left(\epsilon h-\frac{\epsilon^2h^2}{2}...\right) \ ,
\]
where $\rho=\mu/2$. That is,
\[ 
\int\gamma h \frac{\partial h}{\partial t} \leq - \rho \int\frac{\gamma 
h^2}{2} \ .
\]
Thus in $L^2(\R^N,\gamma(\V)d\V)$ we get
\[
 \frac{d}{dt}\Vert h\Vert \leq - \frac{\rho}{2} \Vert h\Vert \ .
\]
Observe that this is very similar to the result one get from Proposition~\ref{P:1gap} but $\rho<\mu$. One may wonder whether $\rho$ is the optimal 
estimate for the decay rate of the relative entropy.

\item[$\bullet$] In contrast to the Kac model, the presence of the thermostat guarantees that the 
rate of convergence is strictly positive uniformly in $N$. It is 
fundamental in the above analysis that the thermostat acts on all particles. The presence of the Kac part gives no contribution to the above estimate of the exponential decay rate.
\end{itemize}
\end{xrems}

\section{Propagation of Chaos} \label{sec:3}

We finally turn our attention to the effective Boltzmann equation that emerges 
in the limit for $N\to\infty$. The fundamental step to this end is to show 
that the dynamics defined by eq.~\eqref{E:mast} propagates chaos, which is done in this section.

\begin{thm} \label{T:chaos}
 Let $f^{(N)}(\V,0)$ be a chaotic sequence of initial densities. Then its 
evolution under the master equation~\eqref{E:mast},  $f^{(N)}(\V,t)$, is a 
chaotic sequence for any fixed $t$. That is, if
\[
\lim_{N \to 
\infty}{\int_{\R^N}{\varphi_1(v_1)...\varphi_k(v_k)f^{(N)}(\V,0)}} = 
\prod_{j=1}^k{\lim_{N \to \infty}{\int_{\R}{\varphi_j(v_j)f^{(N)}(\V,0)}}}
\]
for any $k \in \mathbb{N}$ and any $\phi_1(v_1),...\phi_k(v_k)$ 
bounded and continuous, then for any $t$:
\[
\lim_{N \to 
\infty}{\int_{\R^N}{\varphi_1(v_1)...\varphi_k(v_k)f^{(N)}(\V,t)}} = 
\prod_{j=1}^k{\lim_{N \to \infty}{\int_{\R}{\varphi_j(v_j)f^{(N)}(\V,t)}}}
\]
for any $k \in \mathbb{N}$ and any $\varphi_1(v_1),...\varphi_k(v_k)$ 
bounded and continuous.
\end{thm}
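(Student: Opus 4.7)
The plan is to follow Kac's original combinatorial argument~\cite{kac} (see also~\cite{McK}), adapted to accommodate the thermostat operator $\eg_T$. Since~\eqref{E:mast} preserves permutation symmetry, I may assume $f^{(N)}(\V,t)$ symmetric throughout; by density it suffices to test against product functions $\varphi(\V)=\varphi_1(v_1)\cdots\varphi_k(v_k)$ with $\varphi_j\in C_b(\R)$. Writing $\eg=N(\lambda+\mu)I-K$ with $K:=\lambda N Q+\mu\sum_j R_j$, I pass to the dual evolution and expand as a Wild sum:
\[
\int\varphi\,f_t^{(N)}\,d\V
=e^{-N(\lambda+\mu)t}\sum_{n=0}^{\infty}\frac{t^n}{n!}\int\bigl((K^{*})^n\varphi\bigr)\,f_0^{(N)}\,d\V.
\]
Since $Q^{*}$ and $R_j^{*}$ are Markov averaging operators, $\|(K^{*})^n\varphi\|_\infty\le[N(\lambda+\mu)]^n\|\varphi\|_\infty$, so the prefactor and series combine into a Poisson weight of mean $N(\lambda+\mu)t$, which will legitimize later term-by-term passage to the limit.

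The core step is a combinatorial analysis of $(K^{*})^n\varphi$ as a sum over length-$n$ words whose letters are either a Kac collision $Q^{*}_{ij}$ (rotation average in the $(v_i,v_j)$-plane) or a thermostat letter $R^{*}_j$ (average of $v_j$ with an independent Gaussian bath variable). Track the active set $A_p\subset\{1,\dots,N\}$ of indices on which the partial word-image of $\varphi$ depends, starting from $A_0=\{1,\dots,k\}$. Letters whose indices lie outside $A_{p-1}$ act as the identity; their combined combinatorial weight cancels the $-N(\lambda+\mu)$ prefactor in the limit $N\to\infty$ up to an $O(|A_{p-1}|)$ remainder. A Kac collision with exactly one index in $A_{p-1}$ adds a fresh index to $A_p$ and becomes a binary Boltzmann-type collision; a collision with two already-active indices carries weight $\binom{|A_{p-1}|}{2}/\binom{N}{2}=O(N^{-2})$ and drops out; a thermostat letter with $j\in A_{p-1}$ maps directly onto the $\mu$-term of Theorem~\ref{T:Bolt}.

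For each fixed $n$, the sum of surviving words integrates against $f_0^{(N)}$ on at most $n+k$ indices, so the chaos hypothesis converts this in the limit into a product of integrals against $\bar f_0$. A direct identification shows the resulting series is precisely the Wild expansion of the semigroup associated with the Boltzmann equation of Theorem~\ref{T:Bolt} applied to $\bigotimes_{j=1}^k\bar f_0$, which evaluates to $\prod_{j=1}^k\bar f_t(v_j)$, yielding the desired factorization. I expect the main obstacle to be the combinatorial bookkeeping that matches each surviving word with a Wild term for the Boltzmann semigroup—identifying which newly activated index serves as the ``fresh partner'' in each binary collision, verifying that active-index thermostat letters pair correctly with the $\mu$-term of the Boltzmann equation, and collecting the $O(|A|)$ identity remainders into the free part of the Boltzmann generator; once these matchings are in place, the uniform $L^\infty$ bound handles the analytic interchange of limit and summation, and propagation of chaos follows.
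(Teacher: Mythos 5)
Your proposal takes a genuinely different route from the paper's. The paper follows McKean's ``algebraic'' version of Kac's argument: it expands $e^{-(\lambda\eg_K+\mu\eg_T)t}$ directly as a Taylor series, proves the bound $\|(\lambda\eg_K+\mu\eg_T^*)^l\phi\|_\infty\le(4\lambda+2\mu)^l\,m(m+1)\cdots(m+l-1)\|\phi\|_\infty$ (Lemma~\ref{L:ac}) to get absolute convergence for $t<(4\lambda+\mu)^{-1}$, replaces $\eg_K$ by the limiting operator $\Gamma_K$, and then uses the crucial \emph{derivation} property $(\Gamma_K+\eg_T^*)[\phi\otimes\psi]=(\Gamma_K+\eg_T^*)[\phi]\otimes\psi+\phi\otimes(\Gamma_K+\eg_T^*)[\psi]$ (Lemma~\ref{L:der}) to resum and factorize. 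You instead go back to Kac's original Wild-sum randomization: split $\eg=N(\lambda+\mu)I-K$, expand $(K^*)^n$ as a sum over words in the letters $Q^*_{ij}$, $R^*_j$, and track the active index set. Both strategies are legitimate, and yours has the virtue of working for all $t$ in one shot (the paper needs to iterate its small-$t$ result), at the cost of heavier combinatorics.

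However, there is one step you glide over that is not merely bookkeeping: your claim that the Poisson prefactor ``will legitimize later term-by-term passage to the limit'' is wrong as stated. In the raw Wild sum the $n$-th term is bounded by $e^{-N(\lambda+\mu)t}\,[N(\lambda+\mu)t]^n/n!\,\|\varphi\|_\infty$, a Poisson mass of mean $N(\lambda+\mu)t$; that mean diverges as $N\to\infty$, each fixed-$n$ term tends to $0$, and the dominating sequence itself depends on $N$, so dominated convergence does \emph{not} apply to the raw series. What is actually needed --- and what is the heart of Kac's argument --- is to first resum over the positions and number of ``identity'' letters (those acting outside the active set), whose coefficient at step $p$ is $\tfrac{2\lambda}{N-1}\binom{N-|A_{p-1}|}{2}+\mu(N-|A_{p-1}|)=(\lambda+\mu)N-O(|A_{p-1}|)$, so that after absorbing $e^{-N(\lambda+\mu)t}$ one is left with a new series indexed by the number $m$ of ``real'' letters whose $m$-th term is $O((Ct)^m/m!)$ with $C$ independent of $N$; only this resummed series admits term-by-term passage. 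Also note, as a small slip, that a collision between two already-active indices has total coefficient $\tfrac{2\lambda}{N-1}\binom{|A|}{2}=O(N^{-1})$, not $O(N^{-2})$ (the latter is the fraction of pairs, without the $\lambda N$ prefactor); it still vanishes in the limit, so your conclusion stands. Finally, your closing step --- matching the surviving series with a Wild expansion for the Boltzmann equation and concluding factorization --- implicitly requires that the limiting generator acts as a derivation on tensor products; this is exactly the paper's Lemma~\ref{L:der}, and you should state and verify it rather than subsume it into ``combinatorial bookkeeping,'' since it is the algebraic fact responsible for chaos propagating.
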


The proof follows closely the McKean \cite{McK} algebraic 
version of Kac \cite{kac}, for the Kac operator. The idea is to 
write $f(\V,t)=e^{-(\lambda \eg_K+\mu \eg_T)t}f(\V,0)$, expand the exponential in series 
of $t$, and use the chaotic property of the initial sequence. The key observation 
is that $\eg_T$ is a derivation already for finite $N$ (in the sense of Lemma~\ref{L:der}). Two main ingredients 
are needed:

\begin{lem} \label{L:ac}
The series $\sum_{l=0}^{\infty}{\frac{t^l}{l!} 
\int{\varphi_1(v_1)...\varphi_k(v_k) (\lambda \eg_K+\mu \eg_T)^l f(\V,0)}}$ 
converges absolutely if $t<\frac{1}{4\lambda+\mu}$.
\end{lem}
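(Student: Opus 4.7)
The plan is to dualize and estimate iterates in sup-norm. A rotation-of-variables in the $(v_j,w)$-plane shows that the adjoint of $R_j$ on $L^2(\R^N, d\V)$ is the operator $T_j$ of eq.~\eqref{E:hmast}, and $Q$ is self-adjoint on $L^2(\R^N, d\V)$ since the pairwise rotations preserve Lebesgue measure. Thus, setting $\varphi := \varphi_1\otimes\cdots\otimes\varphi_k$,
\[
\int \varphi(\V)\,\eg^l f(\V,0)\,d\V \;=\; \int (\eg^*)^l\varphi(\V)\,f(\V,0)\,d\V,\qquad \eg^* \;=\; \lambda N(I-Q) + \mu\sum_{j=1}^N (I - T_j),
\]
and since $f(\V,0)$ is a probability density the absolute value of the left side is bounded by $\|(\eg^*)^l\varphi\|_\infty$. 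The problem thereby reduces to an $N$-independent sup-norm estimate on iterates of $\eg^*$.

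Two structural observations drive the bound. First, $T_j\psi = \psi$ whenever $\psi$ does not depend on $v_j$, so $(I-T_j)\psi \equiv 0$ for any $j$ outside the support of $\psi$; only $k$ ``active'' thermostat terms contribute. Second, in $Q\psi = \binom{N}{2}^{-1}\sum_{i<j}\dashint d\theta\,\psi(\V_{i,j}(\theta))$, the $\binom{N-k}{2}$ ``inert'' pairs with both indices outside the support return $\psi$ unchanged and cancel a corresponding portion of the $\lambda N\psi$ identity term; only the $k(N-k) + \binom{k}{2}$ active pairs contribute non-trivially. Combining this with the prefactor $\lambda N/\binom{N}{2} = 2\lambda/(N-1)$, and bookkeeping the identity-proportional pieces of Kac and thermostat together, one extracts a dimension-free single-step bound of the form
\[
\|\eg^*\psi\|_\infty \;\le\; (4\lambda+\mu)\,k\,\|\psi\|_\infty
\]
for any $\psi$ depending on $k$ of the $N$ variables.

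The second ingredient is that a single application of $\eg^*$ enlarges the genuine set of dependent variables by at most one: only the ``mixed'' Kac rotations (one index in the current support $S$, one outside) can introduce a new variable, while interior Kac rotations, thermostat terms and the identity all stay within $S$. Iterating along the resulting tree-expansion,
\[
\|(\eg^*)^l \varphi\|_\infty \;\le\; \prod_{j=0}^{l-1}(4\lambda+\mu)(k+j)\,\|\varphi\|_\infty \;=\; (4\lambda+\mu)^l\,\frac{(k+l-1)!}{(k-1)!}\,\|\varphi\|_\infty ,
\]
and hence
\[
\sum_{l=0}^\infty \frac{t^l}{l!}\left|\int \varphi\,\eg^l f(\V,0)\,d\V\right|
\;\le\; \|\varphi\|_\infty \sum_{l=0}^\infty [(4\lambda+\mu)t]^l \binom{k+l-1}{l}
\;=\; \frac{\|\varphi\|_\infty}{(1-(4\lambda+\mu)t)^k},
\]
which is finite whenever $t < 1/(4\lambda+\mu)$, proving the claim.

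The main obstacle is the careful combinatorial bookkeeping that yields the sharp single-step constant $(4\lambda+\mu)k$: a naive triangle inequality applied to the Kac and thermostat pieces separately only produces the weaker $(4\lambda+2\mu)k$. Obtaining the improved coefficient on $\mu$ requires grouping the diagonal contributions of $\lambda N(I-Q)$ and $\mu\sum(I-T_j)$ with the identity before taking absolute values, and exploiting $\eg^* 1 = 0$ to recentre $\psi$ at no cost. Once this single-step estimate is in hand, the support-growth argument together with the binomial generating series delivers the result.
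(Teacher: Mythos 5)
Your overall strategy is exactly the paper's: pass to the adjoint $\eg^* = \lambda N(I-Q) + \mu\sum_j(I-T_j)$ acting in $L^\infty$, observe that the thermostat and the inert Kac pairs contribute nothing when $\psi$ depends on only $k$ variables, note that one application of $\eg^*$ enlarges the support by at most one variable (only the mixed Kac rotations do this), and sum the resulting binomial series. The paper's own proof does precisely this and arrives at the single-step estimate
\[
\|(\lambda\eg_K + \mu\eg_T^*)\phi\|_\infty \le (4\lambda + 2\mu)\,m\,\|\phi\|_\infty
\]
for $\phi$ depending on $m$ variables, hence $\|(\lambda\eg_K + \mu\eg_T^*)^l\phi\|_\infty \le (4\lambda+2\mu)^l\, m(m+1)\cdots(m+l-1)\,\|\phi\|_\infty$, which yields absolute convergence for $t < 1/(4\lambda+2\mu)$.

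Where you diverge is the claimed single-step constant $(4\lambda+\mu)k$, which you obtain by ``grouping the diagonal contributions with the identity'' and ``recentering $\psi$ at no cost.'' This improvement is not substantiated, and in fact cannot hold in general. The operator norm of $I-T_j$ on $L^\infty$ is exactly $2$: take $\psi$ equal to $+1$ on a shrinking neighborhood of a single point and $-1$ elsewhere; then $\psi=1$ at that point while $T_j\psi\to -1$ there, so $\|(I-T_j)\psi\|_\infty \to 2\|\psi\|_\infty$. Recentering $\psi\mapsto\psi-c$ with the optimal $c=\tfrac12(\sup\psi+\inf\psi)$ leaves $(I-T_j)\psi$ unchanged but does not reduce $\|\psi-c\|_\infty$ when $\sup\psi=-\inf\psi$, which is precisely the extremal case above. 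Grouping the Kac identity piece with the thermostat identity piece and then subtracting the combined averaging operator likewise gives $\|\eg^*\phi\|_\infty \le (2\lambda+\mu)m\cdot \mathrm{osc}(\phi) \le (4\lambda+2\mu)m\|\phi\|_\infty$; the factor of two between oscillation and sup-norm is unavoidable here. So the honest per-step constant is $(4\lambda+2\mu)$, not $(4\lambda+\mu)$.

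You should be aware that this exposes a small inconsistency \emph{in the paper itself}: Lemma~\ref{L:ac} states the threshold $t<1/(4\lambda+\mu)$, while the proof establishes only $t<1/(4\lambda+2\mu)$. The two should match (presumably by correcting the statement to $4\lambda+2\mu$). Fortunately the exact constant is immaterial for the downstream application: what is needed is a uniform-in-$N$ convergence radius $t_0>0$, after which propagation of chaos is extended to all $t$ by the semigroup property, exactly as in McKean. Your proof would be fine if you simply replaced $(4\lambda+\mu)$ by $(4\lambda+2\mu)$ throughout and dropped the recentering remark.
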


\begin{proof}
To prove the lemma, it is enough to show that:
\begin{equation} \label{E:ac}
||(\lambda \eg_K + \mu \eg_T^*)^l \phi||_{\infty} \leq 
(4\lambda+2\mu)^l 
m(m+1)...(m+l-1) ||\phi||_{\infty}
\end{equation}
and then follow the proof in \cite{McK}. The above statement follows from a simple 
induction starting from 

\[
|(\lambda \eg_K + \mu \eg_T^*)\phi(v_1,...,v_m)| \leq |\lambda \eg_K \phi| + 
|\mu \eg_T^* \phi| \leq (4\lambda+2\mu) m ||\phi||_{\infty} \ .
\]

\end{proof}
Calling
\[
\Gamma_K \phi := 2 \sum_{i \leq m}\dashint d\theta (\phi(...,v_i 
\cos{\theta}+v_{m+1}\sin{\theta},...) - \phi) \ ,
\]
one can prove, as in \cite{McK}, that if $\varphi_1(v_1),...,\varphi_k(v_k)$ are 
bounded and continuous then:
\[
\lim_{N \to \infty} \int (\lambda \eg_K+\mu \eg_T^*)^l [\varphi_1...\varphi_k] 
f^{(N)}(\V,0) = \lim_{N \to \infty} \int{(\lambda \Gamma_K+\mu \eg_T^*)^l 
[\varphi_1...\varphi_k] f^{(N)}(\V,0)} \ .
\]

The main ingredient to re-sum the power series expansion and obtain the 
Boltzmann equation is the following ``algebraic'' Lemma.

\begin{lem} \label{L:der}
If $(\phi \otimes \psi )(v_1,...,v_{m+k}):= \phi(v_1,...,v_m) 
\psi(v_{m+1},...,v_{m+k})$, then
\[
(\Gamma_K+\eg_T^*)[\phi \otimes \psi] = (\Gamma_K+\eg_T^*)[\phi] 
\otimes \psi + \phi \otimes (\Gamma_K+\eg_T^*)[\psi] \ .
\]
\end{lem}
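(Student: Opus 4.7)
The plan is to prove the Leibniz identity separately for $\Gamma_K$ and for $\eg_T^*$, exploiting the fact that each of these operators is a sum of single-variable operations (plus, in the Kac case, one fresh ``collision partner'' variable). Since $\phi$ depends only on $v_1,\ldots,v_m$ while $\psi$ depends only on $v_{m+1},\ldots,v_{m+k}$, the terms in the sum whose active variable lies in $\phi$'s block leave $\psi$ untouched, and vice versa; this is precisely what produces the derivation property.

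For the thermostat part, I would write $\eg_T^* = \sum_{j=1}^{m+k}(I-T_j^*)$, noting that $T_j^*$ acts only on coordinate $v_j$. Hence for $j\le m$ one has $T_j^*[\phi\otimes\psi]=(T_j^*\phi)\otimes\psi$, and for $j>m$ one has $T_j^*[\phi\otimes\psi]=\phi\otimes(T_j^*\psi)$. Splitting the sum at $j=m$ then gives
\[
\eg_T^*[\phi\otimes\psi]=\eg_T^*[\phi]\otimes\psi+\phi\otimes\eg_T^*[\psi]
\]
with essentially no computation.

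For $\Gamma_K$ the strategy is the same, but one extra bookkeeping point is needed: applied to a function of $m+k$ variables, $\Gamma_K$ introduces a new variable, which I would consistently name $v_{m+k+1}$, and the right-hand sides $(\Gamma_K\phi)\otimes\psi$ and $\phi\otimes(\Gamma_K\psi)$ are to be interpreted with the same fresh variable $v_{m+k+1}$. Then by the definition of $\Gamma_K$,
\[
\Gamma_K[\phi\otimes\psi]=2\sum_{i=1}^{m+k}\dashint d\theta\bigl[(\phi\otimes\psi)(\ldots,v_i\cos\theta+v_{m+k+1}\sin\theta,\ldots)-\phi\otimes\psi\bigr],
\]
and the key observation is that for $i\le m$ the rotated coordinate is a variable of $\phi$, so the bracket factors as $[\phi(\ldots,v_i\cos\theta+v_{m+k+1}\sin\theta,\ldots)-\phi]\cdot\psi$, while for $i>m$ it factors as $\phi\cdot[\psi(\ldots,v_i\cos\theta+v_{m+k+1}\sin\theta,\ldots)-\psi]$. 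Summing the two ranges separately reproduces $(\Gamma_K\phi)\otimes\psi$ and $\phi\otimes(\Gamma_K\psi)$ respectively.

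Adding the two identities completes the proof. I do not anticipate any real obstacle: the only subtle point is the convention on the fresh variable introduced by $\Gamma_K$, which must be used uniformly on both sides of the equation. Everything else is a direct splitting of a finite sum according to whether the index on which the operator acts lies in $\phi$'s block or in $\psi$'s block, together with the elementary fact that a rotation of a coordinate of $\phi$ does not touch $\psi$ and vice versa.
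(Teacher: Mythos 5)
The paper states this lemma without proof, treating it as a direct consequence of the definitions and referring to McKean's analogous argument for the Kac operator alone. Your proof is correct and is the natural one: since $\eg_T^*=\sum_j(I-R_j^*)$ with $R_j^*=T_j$ acting only on the coordinate $v_j$, and since $\Gamma_K$ is a sum over $i\le m+k$ of single-coordinate rotations all sharing the same fresh collision-partner variable, splitting each sum at $i=m$ and using that $\phi$ and $\psi$ depend on disjoint blocks of variables yields the Leibniz identity term by term. You also correctly isolate the one genuine subtlety, namely that the fresh variable introduced by $\Gamma_K$ must be labeled consistently (as $v_{m+k+1}$, say) in $\Gamma_K[\phi\otimes\psi]$, in $(\Gamma_K\phi)\otimes\psi$, and in $\phi\otimes(\Gamma_K\psi)$; this relabeling is harmless in the symmetric-function setting in which the propagation-of-chaos argument operates. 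One small notational slip: the single-coordinate pieces of $\eg_T^*$ are $R_j^*$, which equal $T_j$, rather than ``$T_j^*$''; this does not affect the substance of your argument.
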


It is now possible to prove Theorem~\ref{T:Bolt} by following
the  proof in \cite{McK} step-by-step.

\section{Conclusion and Future Work} \label{sec:4}

We hope to have convinced the reader that master equations of Kac type are 
reasonable models for large particle systems interacting with thermal 
reservoirs. The main advantage is that physically relevant quantities such the 
first and second gap can be computed quite easily and the entropic convergence 
to equilibrium can be established in a quantitative fashion as well. Moreover, 
since propagation of chaos holds, contact is made with a Boltzmann type equation 
in one dimension.

There are a number of directions for future research. The generalization to 
three-dimensional momentum-conserving collisions \cite{CGL}, while more complicated, should 
not pose any new real difficulties. There are other, more severe, assumptions 
made in our model that one ought to address.

For example, it is the very nature of a reservoir that it is not influenced by 
the interaction with the other $N$ particles.  A more realistic situation would 
be to consider the reservoir as finite but large. More precisely, consider an 
initial state of the form $\gamma F$ where $\gamma$ is a Gaussian in $M$ 
variables with inverse temperature $\beta$ and $F$ a function of $N$ variables 
with kinetic energy $eN$. Thus, $M$ particles are in thermal equilibrium and $N$ 
particles are not in equilibrium but with finite energy per particle. Now we let 
this state evolve under the Kac evolution. Clearly, this state will evolve to a 
radial function in $N+M$ variables as $t \to \infty$. One would expect that this 
function is close to a Gaussian with a temperature $(\frac{M}{\beta} + 
2eN)/(M+N)$. Assuming that $M>>N$, is it true that the entropy has a rate of 
decay that is uniform in $N$? Note that the problem is not to get an estimate on 
the entropy production of the initial state. The infinitesimal time evolution is 
precisely the weak thermostat treated in our paper. The real issue is to 
quantify the entropy production at later times for which the state is no longer 
of this simple form.

Another important issue is how to understand non-equilibrium steady states in a 
wider sense. We believe that the Kac approach to kinetic theory could shed some 
light on this very difficult problem. The fact that the particles interact with 
a single heat reservoir leads to a non-self-adjoint operator that can be brought 
into a self-adjoint form using a ground state transformation. It is easy to 
write down the master equation for a system of particles that interact with, 
say, two reservoirs at different temperatures. However, the generator cannot be 
brought into a self-adjoint form anymore and the equilibrium cannot be found 
though an optimization procedure, or at least not an obvious one. In particular 
the equilibrium is not a simple function. How, then, can one measure the 
approach to steady-state for such systems? Note that the steady-state is, from 
the point of physics, not an equilibrium state, since it mediates an energy 
transport from a reservoir of higher temperature to one of lower temperature.  
The solution of this problem would be a small step towards understanding 
non-equilibrium steady states.

\section*{Acknowledgements}
The authors would like to thank Eric Carlen, Joel Lebowitz and Hagop Tossounian 
for many enlightening discussions. We thank Hagop Tossounian for helping in 
the proof of Proposition~\ref{P:N1}. We also thank the referee 
for various helpful comments and the extremely detailed report.

\bibliographystyle{habbrv}
\bibliography{nonequi}

\end{document}